  \newenvironment{compactenum}[1][1.]{\begin{enumerate}[#1]\setlength{\itemsep}{0pt}\setlength{\parskip}{0pt}}{\end{enumerate}}
  \newenvironment{compactitem}{\begin{itemize}\setlength{\itemsep}{0pt}\setlength{\parskip}{0pt}}{\end{itemize}}
\def\gnewcommand{\g@star@or@long\new@command}
\def\grenewcommand{\g@star@or@long\renew@command}
\def\g@star@or@long#1{\@ifstar{\let\l@ngrel@x\global#1}{\def\l@ngrel@x{\long\global}#1}}
\newcommand{\pqplan}{\textsc{Syn}\-\textsc{chro}\-\textsc{nized} \textsc{Pla}\-\textsc{nari}\-\textsc{ty}\xspace}
\newcommand{\cplan}{\textsc{Clus}\-\textsc{tered} \textsc{Pla}\-\textsc{nari}\-\textsc{ty}\xspace}
\newcommand{\atemb}{\textsc{Atom}\-\textsc{ic} \textsc{Em}\-\textsc{bed}\-\textsc{dabil}\-\textsc{i}\-\textsc{ty}\xspace}
\newcommand{\thicken}{\textsc{Thick}\-\textsc{en}\-\textsc{abil}\-\textsc{i}\-\textsc{ty}\xspace}
\newcommand{\ppqconp}{\textsc{Par}\-\textsc{tial}\-\textsc{ly} \textsc{PQ}-\textsc{con}\-\textsc{strained} \textsc{Pla}\-\textsc{nari}\-\textsc{ty}\xspace}
\newcommand{\sefe}{\textsc{SEFE}\xspace}
\newcommand{\consefe}{\textsc{Con}\-\textsc{nect}\-\textsc{ed} \textsc{SEFE}\xspace}
\newcommand{\convert}{\texttt{Con}\-\texttt{vert}\-\texttt{Small}\xspace}
\newcommand{\contract}{\texttt{En}\-\texttt{cap}\-\texttt{su}\-\texttt{late}\-\texttt{And}\-\texttt{Join}\xspace}
\newcommand{\propagate}{\texttt{Prop}\-\texttt{a}\-\texttt{gatePQ}\xspace}
\newcommand{\simplify}{\texttt{Sim}\-\texttt{pli}\-\texttt{fy}\-\texttt{Match}\-\texttt{ing}\xspace}
\newcommand{\pqplanTitle}{Synchronized Planarity\xspace}
\newcommand{\contractTitle}{EncapsulateAndJoin\xspace}
\newcommand{\propagateTitle}{PropagatePQ\xspace}
\newcommand{\simplifyTitle}{SimplifyMatching\xspace}
\newcommand{\convertop}{\convert\ensuremath{(u,I)}\xspace}
\newcommand{\contractop}{\contract\ensuremath{(\rho,I)}\xspace}
\newcommand{\propagateop}{\propagate\ensuremath{(u,I)}\xspace}
\newcommand{\simplifyop}{\simplify\ensuremath{(u,I)}\xspace}
\newcommand{\pqplanvars}[1][]{\ensuremath{I#1 = (G#1, \mathcal{P}#1, \mathcal{Q}#1, \psi#1)}\xspace}
\newcommand{\mysection}[1]{\smallskip\noindent\textbf{#1}}
\newlength{\punctuationfootlength}
\newcommand{\punctuationfootnote}[2]{#2\settowidth{\punctuationfootlength}{#2}\hspace{-0.5\punctuationfootlength}\nolinebreak\footnote{#1}}
\DeclareMathOperator{\vdeg}{deg}
\DeclareMathOperator{\vdegs}{deg*}
  \newtheorem{lemma}{Lemma}
  \newtheorem{theorem}[lemma]{Theorem}
  \newcommand{\proofomitted}{}
  \newcommand{\proofomitted}{$\ast$}
  \let\origrestatable=\restatable
  \def\restatable{\origrestatable[\proofomitted]}
\crefname{case}{case}{cases}
\crefname{step}{step}{steps}
\title{
  \pqplanTitle with Applications to Constrained Planarity Problems
  \ifthenelse{\boolean{lipics}}{}{
    \thanks{Work partially supported by DFG-grant Ru-1903/3-1.}
  }
}
  \author{Thomas Bläsius}{Faculty of Informatics, Karlsruhe Institute of Technology (KIT), Germany}{thomas.blaesius@kit.edu}{}{}
  \author{Simon D. Fink}{Faculty of Informatics and Mathematics, University of Passau, Germany}{finksim@fim.uni-passau.de}{https://orcid.org/0000-0002-2754-1195}{DFG-grant Ru-1903/3-1}
  \author{Ignaz Rutter}{Faculty of Informatics and Mathematics, University of Passau, Germany}{rutter@fim.uni-passau.de}{https://orcid.org/0000-0002-3794-4406}{DFG-grant Ru-1903/3-1}
  \authorrunning{T. Bläsius, S.\,D. Fink and I. Rutter} 
  \keywords{Planarity Testing, Constrained Planarity, Cluster Planarity, Atomic Embeddability}
  \author[1]{Thomas Bl\"asius}
  \author[2]{Simon D. Fink}
  \author[2]{Ignaz Rutter}
  \affil[1]{Hasso Plattner Institute, University of Potsdam, Germany, \texttt{thomas.blaesius@hpi.de}}
  \affil[2]{Faculty of Informatics and Mathematics, University of Passau, Germany, \texttt{\{finksim,rutter\}@fim.uni-passau.de}}
\begin{document}

\ifthenelse{\boolean{lipics}}{}{
  \thispagestyle{empty}
  \pagenumbering{gobble}
}
\maketitle

\begin{abstract}
  We introduce the problem \pqplan. Roughly speaking, its input is a
  loop-free multi-graph together with synchronization constraints that,
  e.g., match pairs of vertices of equal degree by providing a bijection
  between their edges. \pqplan then asks whether the graph admits a
  crossing-free embedding into the plane such that the orders of edges
  around synchronized vertices are consistent. We show, on the one hand,
  that \pqplan can be solved in quadratic time, and, on the other hand,
  that it serves as a powerful modeling language that lets us easily
  formulate several constrained planarity problems as instances of
  \pqplan. In particular, this lets us solve \cplan in quadratic time,
  where the most efficient previously known algorithm has an upper bound
  of $O\left(n^{8}\right)$.
\end{abstract}

\newpage

\ifthenelse{\boolean{lipics}}{}{
  \pagenumbering{arabic}
}
\setcounter{page}{1}
\section{Introduction}
\label{sec:intro}

A graph is \emph{planar} if it admits an embedding into the plane that
has no edge crossings.  Planarity is a well-studied concept that
facilitates beautiful mathematical
structures~\cite{ht-dgtc-73,Booth1976}, allows for more efficient
algorithms~\cite{Hadlock1975}, and serves as a cornerstone in the
context of network visualization~\cite{Tamassia2013}. It
is not surprising that various generalizations, extensions, and
constrained variants of the \textsc{Planarity} problem have been
studied~\cite{Schaefer2013}.  Examples are \cplan, where the
embedding has to respect a laminar family of
clusters~\cite{Lengauer1989,Blaesius2015};
\textsc{Constrained Planarity}, where the orders of edges incident to
vertices are restricted, e.g., by PQ-trees~\cite{Blaesius2011}; and
\textsc{Simultaneous Planarity}, where two graphs sharing a
common subgraph must be embedded such that their embeddings coincide on
the shared part~\cite{Blaesius2017}.

For planar embeddings, there is the important notion of \emph{rotation}.
The rotation of a vertex is the counter-clockwise cyclic order of incident edges around it.  Many
of the above planarity variants come down to the question whether there
are embeddings of one or multiple graphs such that the rotations of
certain vertices are in sync in a certain way.  Inspired by this
observation, by the \atemb problem~\cite{Fulek2019}, and by the cluster
decomposition tree (CD-tree)~\cite{Blaesius2015}, we introduce a new
planarity variant.  \emph{\pqplan} has a loop-free multi-graph together
with two types of synchronization constraints as input.  Each
\emph{Q-constraint} is given as a subset of vertices together with a
fixed reference rotation for each of these vertices.  The Q-constraint
is satisfied if and only if either all these vertices have their
reference rotation or all these vertices have the reversed reference
rotation.  Vertices appearing in Q-constraints are called
\emph{Q-vertices} and all remaining vertices are
\emph{P-vertices}\punctuationfootnote{The names are based on PQ-trees,
where Q- and P-nodes have fixed and arbitrary rotation, respectively.}. 
A \emph{P-constraint} between two P-vertices $u$ and $v$ defines a
bijection between the edges incident to $u$ and $v$. It is satisfied if
and only if $u$ and $v$ have the opposite rotation under this bijection.
We require that the P-constraints form a matching, that is, no vertex
appears in more than one P-constraint.  The decision problem \pqplan now
asks whether the given graph can be embedded such that all Q- and all
P-constraints are satisfied.

\pqplan serves as a powerful modeling language that lets us express
various other planarity variants using simple linear-time reductions.
Specifically, we provide such reductions for \cplan, \atemb, \ppqconp,
and \textsc{Simultaneous Embedding with Fixed Edges} with a connected shared
graph (\consefe).  Our main contribution is an algorithm that solves
\pqplan, and thereby all the above problems, in quadratic time.

\subsection{Technical Contribution}
\label{sec:techn-contr}

Our result impacts different planarity variants that have been studied
previously.  Before discussing them individually in the context of
previous publications, we point out a common difficulty that has been
a major barrier for all of them, and briefly sketch how we resolve it.

Consider the following constraint on the rotation of a single vertex.
Assume its incident edges are grouped and we only allow orders where
no two groups alternate, that is, if $e_1,e_2$ are in one group and
$e_3,e_4$ are in a different group, then the circular subsequence
$e_1, e_3, e_2, e_4$ and its inverse are forbidden.  Such restrictions
have been called \emph{partition constraints}
before~\cite{Blaesius2015}, and they naturally emerge at cut-vertices
where each incident 2-connected component forms a group.  A single
partition constraint is not an issue by itself, but it becomes
difficult to deal with in combination with further restrictions.  This
is why cut-vertices and disconnected clusters are a major obstacle for
\sefe~\cite{Blaesius2017} and \cplan~\cite{Blaesius2015},
respectively.

The same issues appear for \textsc{Synchronized Planarity}, when we have a
cut-vertex $v$ that is involved in P-constraints, that is, its rotation
has to be synchronized with the rotation of a different vertex $u$.
We deal with these situations as follows, depending on whether $u$ is
also a cut-vertex or not.  If not, it is rather well understood which
embedding choices impact the rotation of $u$ and we can propagate this
from $u$ to $v$\punctuationfootnote{We can also do this if $v$ is not
  a cut-vertex.}.  This breaks the synchronization of $u$ and $v$ down
into the synchronization of smaller embedding choices.
This is a well-known technique that has been used
before~\cite{Blaesius2011,Carsten2008}.  If $u$ is also a
cut-vertex, we are forced to actually deal with the embedding choices
emerging at cut-vertices.  This is done by ``encapsulating'' the
restrictions on the rotations of $u$ and $v$ that are caused by the
fact that they are cut-vertices.  All additional restrictions coming
from embedding choices in the 2-connected components are pushed away
by introducing additional P-constraints.  After this, the cut-vertices
$u$ and $v$ have very simple structure, which can be resolved by
essentially joining them together.  This procedure is formally
described in Section~\ref{sec:contract} and illustrated in
Figures~\ref{fig:encapcut-p-matched} and~\ref{fig:contract-grid}.

This solution can be seen as combinatorial perspective on the recent
breakthrough result by Fulek and Tóth~\cite{Fulek2019}, who resolved
the cut-vertex issue by applying an idea coming from Carmesin's
work~\cite{Carmesin2017}.  While Carmesin works with 2-dimensional
simplicial complexes, Fulek and Tóth achieve their result by
transferring Carmesin's idea to the setting of topological graphs on
surfaces and combining it with tools from their work on
thickenability.  Our work transfers the problem and its solution back
to an entirely combinatorial treatment of topological graphs in the
plane.  This further simplification allows us to more clearly
highlight the key insight that makes the algorithm tick and at the
same time provides access to a wide range of algorithmic tools for
speeding up the computations.
\ifthenelse{\boolean{long}}{}{Due to space constraints, proofs of statements marked with a star are only given in the full version.}

\subsection{Related Work}
\label{sec:related-work}

\cplan was first considered by Lengauer~\cite{Lengauer1989} and later
rediscovered by Feng et al.~\cite{Feng1995}.  In both cases, the authors
give polynomial-time algorithms for the case that each cluster induces a
connected graph. The complexity of the general problem that allows
disconnected clusters has been open for 30 years.  In that time, many
special cases have been shown to be polynomial-time
solvable~\cite{Angelini2019,Cortese2008,Fulek2015,Gutwenger2002} before
Fulek and Tóth~\cite{Fulek2019} recently settled \cplan in P.  The core
ingredient for this is their $O(n^8)$ algorithm for the \atemb problem. 
It has two graphs $G$ and $H$ as input. Roughly speaking, $H$ describes
a 3-dimensional molecule structure with atoms represented by spheres and
connections (a.k.a.\ pipes) represented by cylinders.  The other graph
$G$ comes with a map to the molecule structure that maps each vertex to
an atom such that two neighboring vertices lie on the same atom or on
two atoms connected by a pipe. \atemb then asks whether $G$ can be
embedded onto the molecule structure such that no edges cross.

\atemb has been introduced as a generalization of the \thicken problem
that appears in computational topology~\cite{Akitaya2019}.
It can be shown that \atemb and \thicken \ifthenelse{\boolean{long}}{(and actually also \pqplan, as
discussed in \Cref{sec:comparison})}{} are linear-time
equivalent~\cite{Fulek2019}.
Thus, the above $O(n^8)$ algorithm for \atemb also solves \thicken and
\pqplan.
In a preprint, Carmesin~\cite{Carmesin2017} proves a Kuratowski-style
characterization of \thicken, which he claims yields a quadratic
algorithm as a byproduct.  While it is believable that the running
time of his algorithm is polynomial, a detailed runtime analysis is
missing.  In light of this, we only compare our algorithm to 
the~$O(n^8)$-algorithm by Fulek and Tóth.  
\ifthenelse{\boolean{long}}
{A detailed comparison of their solution to \atemb and our solution to \pqplan is given in \Cref{sec:comparison}.}
{For a detailed comparison of their solution to \atemb and our solution to \pqplan, we refer to the full version.}

To finally solve \cplan, Fulek and Tóth~\cite{Fulek2019} use the
reduction of Cortese and Patrignani~\cite{Cortese2018} to
\textsc{Independent Flat Clustered Planarity}, which they then reduce
further to \thicken. The last reduction to \thicken is based on a
combinatorial characterization of \thicken by Neuwirth~\cite{n-acmc-68},
which basically states that multiple graphs have to be embedded
consistently, that is, such that the rotation is synchronized between
certain vertex pairs of different graphs.
Via the reduction from \consefe to \cplan given by Angelini and Da
Lozzo~\cite{ad-scp-16}, the above result extends to \consefe, which was
a major open problem in the context of simultaneous graph
representations~\cite{bkr-hgdv-13}.  We flatten this chain of reductions
by giving a simple linear reduction from each of the problems \consefe,
\cplan, and \atemb to \ifthenelse{\boolean{long}}{\pqplan in \Cref{sec:applications},}{\pqplan,} yielding
quadratic-time algorithms for all of them.
\ifthenelse{\boolean{long}}{}{Due to space constraints, the reductions are only given in the full version.}
Moreover, the problem \ppqconp, for which we also give a linear
reduction to \pqplan, has been solved in polynomial time before, but
only for biconnected graphs~\cite{Blaesius2011} and in the non-partial
setting where either all or none of the edges of a vertex are
constrained~\cite{Carsten2008}.

\section{Preliminaries}\label{sec:preliminaries}

A \emph{partition} of a base set $X$ is a grouping of its elements
into non-empty subsets, the \emph{cells}, so that every element is
contained in exactly one cell. We assume a set implementation allowing
constant-time insertion and removal of elements, such as doubly-linked
lists with pointers stored with the elements.
When referring to graphs, we generally mean \emph{loop-free multi-graphs}.
A \emph{(multi-)star} consists of a \emph{center vertex} connected by multiple, possibly parallel, edges to its \emph{ray vertices}.
A \emph{$k$-wheel} is a $k$-cycle, where each node is also connected to an additional central node.
Furthermore, we assume a graph representation that allows efficient manipulation, such as an adjacency list with doubly-linked lists.

\mysection{Drawings, Embeddings and Cyclic Orders.}
A \emph{(topological) drawing} $\Gamma$ of a graph is a mapping of
every vertex $v$ to a point $p_v \in \mathbb{R}^2$ in the plane and a
mapping of every edge $\{u, v\}$ to a Jordan arc having $p_u$ and
$p_v$ as endpoints.
A drawing uniquely defines cyclic orders of edges incident to the same
vertex. Drawings with the same cyclic orders are considered
equivalent, their equivalence class is called \emph{(combinatorial)
embedding}. For an embedding~$\mathcal{E}$, we use $\mathcal{E}(u)$ to
denote the cyclic order of the edges incident to $u$ as given by
$\mathcal{E}$, which is also called the \emph{rotation} of $u$. For a
(cyclic) order $\sigma=\langle x_1,\ldots,x_k \rangle$ of $k$
elements, we use $\overline{\sigma}=\langle x_k,\ldots,x_1 \rangle$ to
denote its reversal.

\mysection{The \pqplanTitle Problem.}
An instance is a tuple $\pqplanvars$, where
\begin{compactenum}
\item $G=(P \cup Q, E)$ is a (loop-free) multi-graph with P-vertices $P$ and Q-vertices $Q$,
\item $\mathcal Q$ is a partition of~$Q$,
\item $\mathcal \psi$ is a mapping that assigns a rotation to each
  Q-vertex, and
\item $\mathcal P$ is a set of triples $(u,v,\varphi_{uv})$,
  where $u$ and $v$ are P-vertices of the same degree, $\varphi_{uv}$ is
  a bijection between their incident edges, and each P-vertex occurs at most once in~$\mathcal P$.
\end{compactenum}
We call the triples $\rho=(u,v,\varphi_{uv})$ in $\mathcal P$
\emph{pipes}. Pipes are not directed and we identify
$(u,v,\varphi_{uv})$ and $(v,u,\varphi_{vu})$ with
$\varphi_{vu}=\varphi_{uv}^{-1}$. We also define $\vdeg(\rho) =
\vdeg(u) = \vdeg(v)$. If two P-vertices are connected by a pipe, we call
them \emph{matched}; all other P- and Q-vertices are \emph{unmatched}.

The planar embedding $\mathcal E$ of $G$ \emph{satisfies the cell} $X
\in \mathcal Q$ if it is either $\mathcal E(v) = \psi(v)$ for all $v
\in X$ or $\mathcal E(v) = \overline{\psi(v)}$ for all $v \in X$.  We
say that the embedding satisfies the \emph{Q-constraints} if it
satisfies all cells, that is, vertices in the same cell of the
partition $\mathcal Q$ are consistently oriented. The embedding
$\mathcal E$ \emph{satisfies the pipe} $\rho=(u,v,\varphi_{uv})$ if
$\varphi_{uv}(\mathcal E(u)) = \overline{\mathcal E(v)}$, that is,
they have opposite rotations under the bijection~$\varphi_{uv}$. We
say that the embedding satisfies the \emph{P-constraints} if it
satisfies all pipes. The embedding~$\mathcal E$ is called \emph{valid}
if it satisfies the P-constraints and the Q-constraints.
The problem \pqplan asks whether a given instance $\pqplanvars$
admits a valid embedding.

\mysection{PQ-Trees and Embedding Trees.}
A \emph{PQ-tree} represents a set of circular orders of its leaves by
partitioning its inner nodes into two classes: For \emph{Q-nodes} the
rotation of incident edges is fixed up to reversal, for \emph{P-nodes},
this order can be chosen arbitrarily. Rooted PQ-trees have initially
been studied by Booth and Lueker \cite{Booth1976}. There is an
equivalence between rooted and unrooted PQ-trees \cite{Hsu2001}, where
the latter are also called PC-trees \cite{Shih1999}. We thus do not
distinguish them and simply use the term PQ-trees.
Note that a P-node with three or less neighbors allows the same
permutations as a Q-node of the same degree. We thus assume P-nodes to
have degree at least 4. We consider a PQ-tree \emph{trivial} if it
consists of a single inner P-node (with at least four leaves).
Otherwise, it consists of a single Q-node with at least two leaves, or
it contains at least two inner nodes, all of which have degree at least~3.

For a vertex of a planar biconnected graph, all rotations induced by planar
embeddings can efficiently be represented by a PQ-tree~\cite{Booth1976}.
This PQ-tree is also called the \emph{embedding tree} of the respective
node.
In the context of \pqplan, we assume that the embedding tree of a
vertex does not allow rotations that would result in a Q-vertex $v$ having
any other rotation than its default ordering~$\psi(v)$ or its
reverse~$\overline{\psi(v)}$.  To ensure this, we can subdivide each
edge incident to $v$ and connect each pair of two of the new nodes if
the edges they subdivide are consecutive in the cyclic order
$\psi(v)$~\cite{Carsten2008}. Note that this generates a $k$-wheel
with center $v$ and that there are exactly two planar rotations of the
center of a wheel, which are the reverse of each other.  We always
generate the embedding trees based on the graph where each Q-vertex in
$G$ is temporarily replaced with its respective wheel.

\mysection{Connected Components.}
A separating $k$-set is a set of $k$ vertices whose removal increases
the number of connected components. Separating 1-sets are called
\emph{cut-vertices}, while separating 2-sets are called \emph{separation
pairs}. A connected graph is \emph{biconnected} if it does not have a
cut-vertex. A biconnected graph is \emph{triconnected} if it does not
have a separation pair. Maximal biconnected subgraphs are called
\emph{blocks}.
A vertex that is not a cut-vertex and thus resides within an unique
block is called \emph{block-vertex}.

Hopcroft and Tarjan~\cite{ht-dgtc-73} define a graph decomposition into
triconnected components, also called
\emph{SPQR-tree}~\cite{Battista1996}, where the components come in three
shapes: \emph{bonds} consist of two \emph{pole} vertices
connected by multiple parallel edges, \emph{polygons} consist of a
simple cycle, and \emph{rigids}, whose embeddings are unique up to
reflection.  Each edge of these components is either \emph{real},
representing a single edge of the original graph, or \emph{virtual},
representing a subgraph.

Every planar embedding of a biconnected planar graph can be obtained
from an arbitrary planar embedding by flipping its rigids and reordering
the parallel edges in its bonds~\cite{ht-dgtc-73}.  The decomposition
can be computed in linear time~\cite{gm-ltisp-00} and can be used 
to compute the embedding
trees in linear time~\cite[Section 2.5]{Blaesius2011}.

\mysection{Splits and Joins of Graphs and Embeddings.}\label{sec:split-join}
Let $G=(V,E)$ be a graph. We call a partition  $C=(X,Y)$ of $V$ into two
disjoint cells a \emph{cut} of $G$. The edges $E(C)$ that have their
endpoints in different cells are called \emph{cut edges}.
The \emph{split} of $G$ at $C=(X,Y)$ is the disjoint union of the two
graphs obtained by contracting $X$ and $Y$ to a single vertex $x$ and
$y$, respectively (keeping possible multi-edges); see~\Cref{fig:join}.
Note that the edges incident to $x$ and $y$ are exactly the cut edges,
yielding a natural bijection~$\varphi_{xy}$ between them.
Conversely, given two graphs
$G_1=(V_1,E_1),G_2=(V_2,E_2)$ and vertices $x \in V_1$, $y \in V_2$
together with a bijection~$\varphi_{xy}$ between their incident edges,
their \emph{join} along~$\varphi_{xy}$ is the graph~$G=(V,E)$,
where $V=V_1\cup V_2\setminus\{x,y\}$ and $E$ contains all edges of $E_1
\cup E_2$ that are not incident to $x$ or $y$, and for each edge
$e=ux$ incident to $x$, $E$ contains an edge $uv$, where $v$ is the
endpoint of~$\varphi_{xy}(e)$ distinct from~$y$; see~\Cref{fig:join}. 
Observe that split and join are inverse operations.

We say that a planar embedding $\mathcal E$ of a graph $G$
\emph{respects} a cut $C=(X,Y)$ if and only if for a topological planar
drawing~$\Gamma$ of $G$ with embedding $\mathcal E$ there exists a
closed curve~$\gamma$ such that (i) $\gamma$ separates $X$ from~$Y$,
(ii) $\gamma$ crosses each edge in $E(C)$ in exactly one point, and
(iii) $\gamma$ does not cross any edge in $E \setminus E(C)$;
see~\Cref{fig:join}.  We say that $\gamma$ \emph{represents $C$
in~$\Gamma$}.

If $\mathcal E$ respects $C$, a split at $C$ preserves $\mathcal E$ as follows.
Let $G_1$ and~$G_2$ be the graphs resulting from
splitting $G$ at $C$ and let~$x \in V_1$ and $y \in V_2$ such
that~$\varphi_{xy}$ identifies their incident edges.  Let~$\Gamma$ be a
topological planar drawing with embedding~$\mathcal E$ and let~$\gamma$
be a curve in~$\Gamma$ that represents $C$ in~$\Gamma$.  We obtain
planar drawings~$\Gamma_1$ and~$\Gamma_2$ of~$G_1$ and~$G_2$ by
contracting to a single point the side of~$\gamma$ that contains $V_2$
and $V_1$, respectively.  We denote by~$\mathcal E_1$ and~$\mathcal E_2$
the corresponding combinatorial embeddings of $G_1$ and~$G_2$.  Note
that by construction for each vertex of~$V_1 \setminus \{x\}$ the
rotations in~$\mathcal E$ and~$\mathcal E_1$ coincide, and the same holds
for vertices of $V_2 \setminus \{y\}$ in $\mathcal E$ and $\mathcal
E_2$. Moreover, the rotations~$\mathcal E_1(x)$ and~$\mathcal E_2(y)$
are determined by the order in which the edges of $E(C)$ cross~$\gamma$,
and therefore they are oppositely oriented, that is, $\varphi_{xy}(\mathcal
E_1(x)) = \overline{\mathcal E_2(y)}$.  We call embeddings~$\mathcal
E_1$ and~$\mathcal E_2$ with this property \emph{compatible with
$\varphi_{xy}$}.

Conversely, we can join arbitrary embeddings $\mathcal E_1$ of~$G_1$ and
$\mathcal E_2$ of $G_2$ that are compatible with
$\varphi_{xy}$ by assuming that $x$ and~$y$ lie on the outer face,
removing $x$ and~$y$ from the embeddings, and connecting the resulting
half-edges according to~$\varphi_{xy}$.  The result is a planar
embedding~$\mathcal E$ where for each vertex $v \in V_i\setminus\{x,y\}$
it is~$\mathcal E(v) = \mathcal E_i(v)$ for $i=1,2$.

\begin{figure}[t]
  \centering
  \includegraphics[scale=1,page=4]{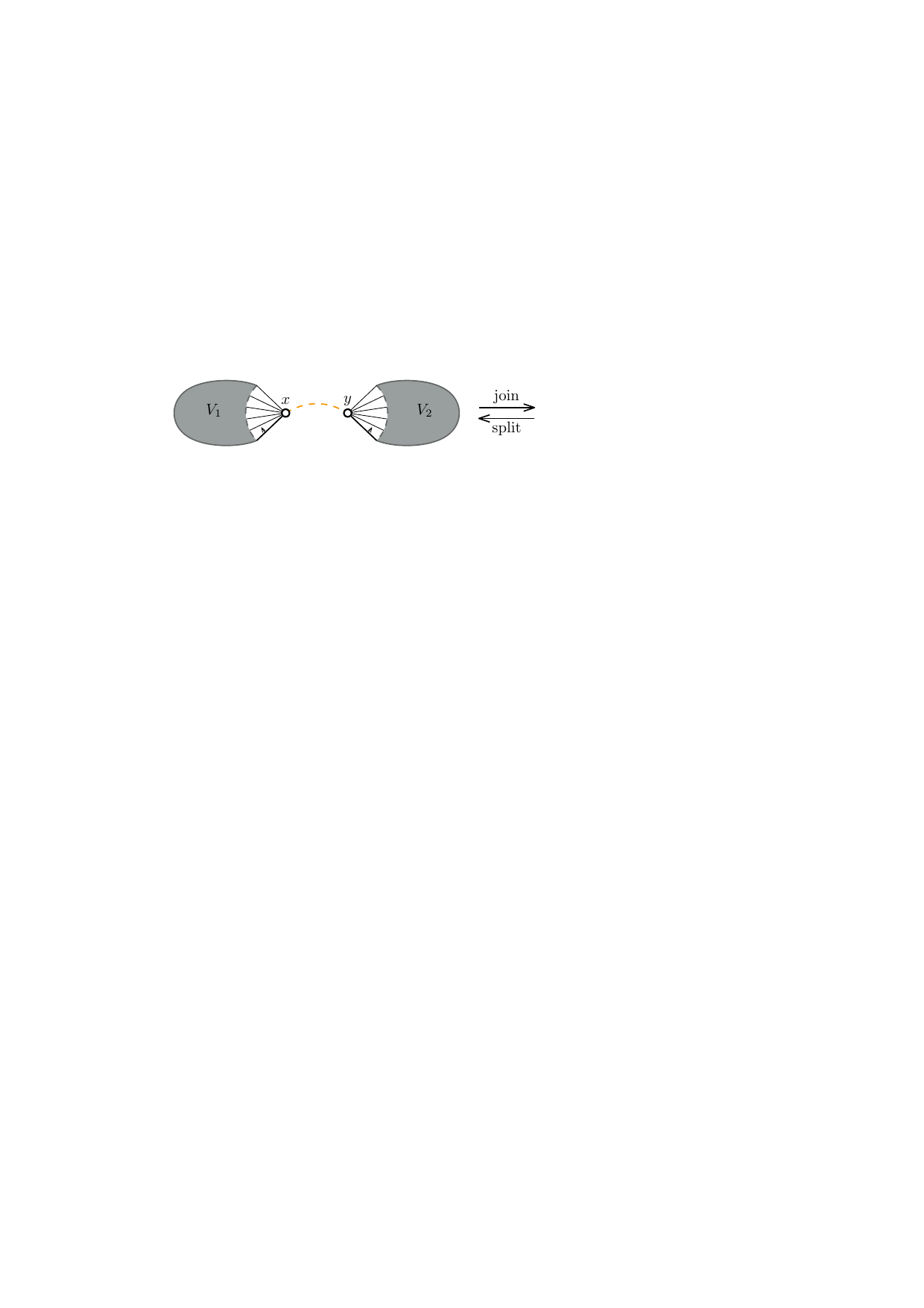}
  \caption{Joining and splitting two graphs at $x \in V_1$ and $y
  \in V_2$. The bijection $\varphi_{xy}$ between their incident
  edges is shown as follows: the two bold edges at the bottom are
  mapped to each other. The other edges are mapped according to
  their order following the arrow upwards (i.e. clockwise for $x$ and
  counter-clockwise for $y$).}
  \label{fig:join}
\end{figure}

\begin{restatable}{lemma}{lemSplitConnected}
  \label{lem:biconnected-all-embeddings-respect-cut}
  Let $G=(V,E)$ be a planar graph and let~$(X,Y)$ be a cut of $G$
  such that $X$ and~$Y$ induce connected subgraphs of $G$.  Then
  every planar embedding of~$G$ respects $(X,Y)$.
\end{restatable}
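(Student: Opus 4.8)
The plan is to prove the contrapositive in spirit: take an arbitrary planar embedding $\mathcal E$ of $G$, fix a topological drawing $\Gamma$ realizing it, and exhibit a closed curve $\gamma$ that witnesses that $\mathcal E$ respects the cut $(X,Y)$. The key geometric idea is that since $X$ induces a connected subgraph, the union of the closed regions covered by the drawing of $G[X]$ (the points of the vertices in $X$ together with the arcs of edges with both endpoints in $X$) is a connected, compact subset of the plane; the same holds for $G[Y]$. Moreover these two sets are disjoint, because $G$ is loop-free and simple enough that no vertex or edge of $G[X]$ coincides with one of $G[Y]$, and crossings of cut edges with $G[X]$ or $G[Y]$ do not affect the point sets of the vertices/intra-cell edges themselves. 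I would first argue that I may assume $\Gamma$ is drawn so that the cut edges $E(C)$ cross nothing except possibly each other — actually, since $\mathcal E$ is an embedding of the (crossing-free) planar graph $G$, I can take $\Gamma$ itself to be crossing-free, which makes all the point sets genuinely disjoint.

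Next I would make the separating curve explicit. Consider a small open tubular neighbourhood $N$ of the drawing of $G[X]$: since $G[X]$ is connected and crossing-free, $N$ is an open connected set whose closure is a disk-with-handles-free region — more precisely, for a connected plane graph a sufficiently thin neighbourhood of its drawing is homeomorphic to an open disk if the graph is a tree, and in general is an open connected region whose boundary $\gamma = \partial N$ is a single closed curve provided $N$ is chosen to ``fill in'' the bounded faces of $G[X]$. The cleanest way to see this: let $F$ be the unique face of the plane graph $G[X]$ that contains all of $Y$'s drawing (it is unique because $Y$'s drawing is connected, hence lies in a single face of $G[X]$); then let $\gamma$ be the boundary of that face $F$, pushed slightly into $F$ so that it becomes a simple closed curve crossing each edge of $G[X]$ incident to $F$ zero times and lying entirely in the interior of $F$. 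This $\gamma$ separates $X$ from $Y$: $X$'s drawing lies in the complement of $F$ (on the non-$Y$ side), $Y$'s drawing lies strictly inside $\gamma$. Then I check the three conditions from the definition of ``respects'': (i) separation is by construction; (iii) $\gamma$ crosses no edge of $E \setminus E(C)$ because such an edge lies entirely in $G[X]$'s drawing or entirely in $G[Y]$'s drawing, and $\gamma$ is disjoint from the former and interior-disjoint-but-can-be-pushed-off the latter; (ii) each cut edge $e = xy$ with $x \in X$, $y \in Y$ starts outside $\gamma$ and ends inside it, so it crosses $\gamma$ an odd number of times, and by perturbing $\gamma$ (or the tail of $e$) along the edge we can reduce this to exactly one crossing.

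The main obstacle I expect is the topological bookkeeping in the previous paragraph: rigorously producing a \emph{simple} closed curve $\gamma$ that is the (perturbed) boundary of the face $F$ of $G[X]$ containing $Y$, and arguing that each cut edge can be arranged to cross it exactly once rather than some larger odd number of times. The subtlety is that a face boundary of a plane multigraph need not be a simple cycle (it can repeat vertices, e.g. at cut-vertices of $G[X]$, and it traverses bridges twice); so ``push the face boundary slightly into $F$'' has to be done as a genuine curve-smoothing argument: walk along the facial walk of $F$ in $G[X]$, staying at distance $\varepsilon$ inside $F$, and at each bridge or repeated vertex the thin-neighbourhood boundary still closes up into one simple curve because $F$ is a single face (connected open region with connected—after filling—complement on the $Y$ side). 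For condition (ii), I would note that once $\gamma$ bounds a disk $D \supseteq$ (drawing of $G[Y]$) with $D \cap$ (drawing of $G[X]$) $= \emptyset$, a cut edge $e$ enters $D$ through $\gamma$; among its crossings with $\gamma$, consecutive ones cut off a subarc of $e$ lying in $D$ or outside $D$, and we may reroute $e$ along $\gamma$ to eliminate pairs of crossings without introducing crossings with other edges (other edges near $\gamma$ are only other cut edges, and they can be handled simultaneously by taking $\gamma$ close enough to $\partial N$ and nesting the reroutings), leaving exactly one crossing. This yields a drawing $\Gamma'$ with the same embedding $\mathcal E$ in which $\gamma$ represents $C$, which is exactly what is required.

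Finally I would remark that the conclusion is symmetric in $X$ and $Y$ and that, by the ``splits preserve embeddings'' discussion preceding the lemma, this immediately gives compatible embeddings $\mathcal E_1, \mathcal E_2$ of the two sides under the natural bijection $\varphi_{xy}$ — though that consequence is not part of the statement and I would only mention it in passing.
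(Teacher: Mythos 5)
Your proof is correct in outline but takes a genuinely different route from the paper's. The paper disposes of the lemma in three lines via planar duality: because $X$ and $Y$ induce connected subgraphs, no proper subset of $E(X,Y)$ disconnects them, so $E(X,Y)$ is a \emph{bond} (a minimal edge cut), and bonds of a connected plane graph correspond exactly to simple cycles of the dual graph $G^\star$ (Diestel, Proposition~4.6.1); drawing that dual cycle through the faces and across the dual edges yields precisely the curve $\gamma$, with each cut edge crossed once and nothing else touched. You instead build $\gamma$ by hand as the $\varepsilon$-offset of the boundary of the face $F$ of $G[X]$ that contains the drawing of $G[Y]$, and then normalize the crossings with the cut edges by an isotopy. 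Note how the connectivity hypothesis enters differently in the two arguments: for the paper it gives minimality of the cut, for you it guarantees that $G[Y]$ sits in a single face of $G[X]$ (and symmetrically). What the duality route buys is that all of the topological bookkeeping you correctly identify as the hard part --- that the offset of a facial walk of a connected plane graph closes up into one simple curve even at cut-vertices and bridges, and that the crossings of each cut edge with $\gamma$ can be reduced to exactly one --- is absorbed into a citable textbook result. Your crossing-reduction step is the one place I would push for more care: rerouting the edges risks introducing edge--edge crossings inside the collar, so it is cleaner to leave the drawing fixed and perform innermost finger moves on $\gamma$ itself, which monotonically decreases the total number of $\gamma$--edge crossings and terminates with one crossing per cut edge while keeping $\gamma$ simple and separating. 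With that adjustment your argument is a complete, self-contained alternative proof.
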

\newcommand{\proofSplitConnected}{
\begin{proof}
  Let~$\mathcal E$ be a planar embedding of $G$.  Since~$X$ and~$Y$
  induce connected subgraphs, it follows that no proper subset of
  $E(X,Y)$ is a cut.  Therefore, $(X,Y)$ is a so-called bond, which
  corresponds to a simple cycle $C^\star$ in the dual graph $G^\star$
  with respect to~$\mathcal E$~\cite[Proposition 4.6.1]{d-gt-17},
  which in turn implies that $\mathcal E$ respects $(X,Y)$.
\end{proof}}
\ifthenelse{\boolean{long}}{\proofSplitConnected}{\vspace{-0.2cm}}
\begin{restatable}{lemma}{lemSplitBipartite}
  \label{lem:bipartite-all-embeddings-respect-cut}
  Every planar embedding of a bipartite graph $G=(A\cup B, E)$
  respects $(A,B)$.
\end{restatable}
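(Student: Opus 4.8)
The plan is to construct, for an arbitrarily given planar embedding $\mathcal E$ of $G$, an explicit simple closed curve $\gamma$ witnessing that $\mathcal E$ respects the cut $(A,B)$. Observe first that, since $G$ is bipartite with parts $A$ and $B$, every edge of $G$ joins the two parts, so $E(A,B)=E$ and $E\setminus E(A,B)=\emptyset$. Hence condition~(iii) in the definition of ``respects'' is vacuous, and it suffices to produce a simple closed curve that has all of $A$ in its interior, all of $B$ in its exterior, and crosses every edge of $G$ in exactly one point. Unlike in \Cref{lem:biconnected-all-embeddings-respect-cut}, neither side induces a connected subgraph here -- in fact $G[A]$ and $G[B]$ are edgeless -- so I cannot invoke that lemma and instead build $\gamma$ by hand.

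Concretely, I would fix a drawing $\Gamma$ realizing $\mathcal E$ and start with a tiny circle $C_a$ around each $a\in A$, bounding a disk $D_a$ whose interior meets $G$ only in $a$; for small enough disks these are pairwise disjoint and $C_a$ crosses each edge incident to $a$ exactly once. These disks then have to be merged into a single simply connected region by thin corridors drawn inside faces of $\mathcal E$. To organize the merging, define an auxiliary graph $H$ on vertex set $A$ in which two $A$-vertices are adjacent whenever they both lie on some common face of $\mathcal E$ (fixing one such witnessing face per edge of $H$). Take a spanning tree $T$ of $H$; for each edge of $T$, witnessed by a face $\phi$, route a thin corridor inside $\phi$ connecting the two corresponding circles, attached in the angular sectors of the two vertices that belong to $\phi$. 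Since each face is a connected region and $T$ is acyclic, the corridors can be taken pairwise disjoint and disjoint from the remaining disks, so the union $R$ of all disks and corridors is a simply connected region and $\gamma:=\partial R$ is a single simple closed curve. By construction $\gamma$ has exactly the vertices of $A$ in its interior (the corridors lie in face interiors and contain no vertices), and each edge $ab$ is crossed by $\gamma$ exactly once, namely near $a$ along $C_a$ -- the corridors stay clear of all edges except where they attach near $A$-vertices.

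The one combinatorial point needing an argument is that $H$ is connected when $G$ is connected. Given $a,a'\in A$, connectedness and bipartiteness of $G$ provide an alternating walk $a=c_0,c_1,\dots,c_{2k}=a'$ with $c_{2i}\in A$ and $c_{2i+1}\in B$; it is enough to connect $c_{2i}$ to $c_{2i+2}$ in $H$. Both are neighbors of $b:=c_{2i+1}$, so list the edges incident to $b$ in their rotation around $b$ as a cyclic sequence running from an edge to $c_{2i}$ up to an edge to $c_{2i+2}$. Each two consecutive edges of this sequence bound a common face, and their endpoints other than $b$ lie in $A$ (because $b\in B$) and on that face, hence are adjacent in $H$. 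Walking along the sequence connects $c_{2i}$ and $c_{2i+2}$, and chaining over $i$ connects $a$ and $a'$.

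For disconnected $G$ the same construction applies: any face incident to two distinct components (in particular the face in which one component is nested inside another) yields $H$-edges between $A$-vertices of those components, while a component without $A$-vertices is a single $B$-vertex or empty and can simply be left outside $\gamma$; the degenerate cases $A=\emptyset$ and $E=\emptyset$ are immediate. One could alternatively first reduce to connected $G$ by inserting, inside appropriate faces, new edges joining an $A$-vertex of one component to a $B$-vertex of another, which keeps $G$ bipartite and $\mathcal E$ planar and does not affect the statement. The part I expect to be the actual work is not the combinatorics -- connectivity of $H$ is easy -- but the topological bookkeeping: verifying carefully that the circles and corridors can be realized as genuinely disjoint simple arcs inside the (possibly non-simply-connected) faces of $\mathcal E$, so that $R$ is a closed disk and $\gamma$ is a single Jordan curve meeting each edge in exactly one point.
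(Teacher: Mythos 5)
Your overall strategy---building the separating curve $\gamma$ directly as the boundary of a union of small disks around the $A$-vertices joined by corridors through faces---is viable and genuinely different from the paper's proof, but one step does not hold as written. You claim that ``since each face is a connected region and $T$ is acyclic, the corridors can be taken pairwise disjoint.'' Acyclicity of $T$ does not give this: if $H$ is defined merely by ``both lie on a common face,'' a spanning tree of $H$ may contain two edges witnessed by the same face whose endpoints interleave along that face's boundary. For instance, in the $8$-cycle $a_1b_1a_2b_2a_3b_3a_4b_4$ the edge set $\{a_1a_3,\,a_2a_4,\,a_1a_2\}$ is a spanning tree of $H$, yet corridors for $a_1a_3$ and $a_2a_4$ inside the inner face must cross. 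The fix is already implicit in your connectivity argument: the $H$-edges it produces always join the two $A$-endpoints of \emph{consecutive} edges in the rotation of a $B$-vertex $b$, and a corridor for such a pair can be routed inside the angular sector at the corner $b$, hugging the boundary. Corridors of this restricted form are pairwise disjoint (distinct corners, or distinct angular sectors at the same $b$), so you should define $H$ via these corner-adjacencies only and then take the spanning tree; with that change (plus the routine care you already flag for attaching several corridors to one circle) the construction goes through. The disconnected case is also best handled by your augmentation alternative rather than by ``leaving components outside $\gamma$.''

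For comparison, the paper goes the opposite way: it augments $G$ and $\mathcal E$ with edges from $\binom{A}{2}\cup\binom{B}{2}$ (by induction, repeatedly joining two $A$-vertices sharing a face and contracting the new edge) until $A$ and $B$ both induce connected subgraphs, and then invokes \Cref{lem:biconnected-all-embeddings-respect-cut}, whose proof delegates the existence of $\gamma$ to the standard fact that a bond corresponds to a cycle in the dual graph. Your route is more self-contained and exhibits $\gamma$ explicitly; the paper's is shorter because the curve-drawing is outsourced to planar duality, at the price of an augmentation step that is morally the same face-sharing argument you use to prove connectivity of $H$.
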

\newcommand{\proofSplitBipartite}{
\begin{proof}
  Let~$\mathcal E$ be a planar embedding of~$G$.  We claim that we can
  augment~$G$ and its embedding~$\mathcal E$ by additional edges in
  $\binom{A}{2} \cup \binom{B}{2}$ to a graph $G'$ with planar
  embedding~$\mathcal E'$ such that $A$ and $B$ induce connected
  subgraphs.  The fact that~$\mathcal E$ respects the cut $(A,B)$ then
  follows from Lemma~\ref{lem:biconnected-all-embeddings-respect-cut}.

  We prove the existence of such an augmentation by induction on
  $|A|+|B|$.  The statement clearly holds if~$|A|=|B|=1$.  Therefore
  assume~$|A|+|B|>2$ and without loss of generality assume $|A|>2$.
  Then there is a face that contains at least two vertices $a_1,a_2$
  of~$A$.  Let $G'=(A' \cup B',E')$ be the bipartite planar graph with
  planar embedding~$\mathcal E'$ resulting from adding the edges
  $a_1a_2$ and contracting it into a single vertex $a$.  By the
  inductive hypothesis, graph~$G'$ with embedding~$\mathcal E'$ can be
  augmented to bipartite $G''$ with embedding~$\mathcal E''$ so
  that~$A'$ and~$B'$ induce connected subgraphs.  We now expand $a$
  into two separate vertices $a_1,a_2$, undoing the previous
  contraction.  The resulting graph $G'''$ (including the edge
  $a_1a_2$) with embedding~$\mathcal E'''$ is the desired
  augmentation.
\end{proof}}
\ifthenelse{\boolean{long}}{\proofSplitBipartite}{}

\section{The \pqplanTitle Problem}\label{sec:pqplan}

We give an algorithm for solving \pqplan for graphs with  $n$ vertices
and $m$ edges in $O(m^2)$ time. Without loss of generality, we assume
that $G$ has no isolated vertices and thus~$m\in \Omega(n)$.
Furthermore, we assume the input graph $G$ to be planar.

\subsection{High-Level Algorithm}
\label{sec:high-level-algorithm}

Our approach hinges on three main ingredients.
The first are the three operations \contract, \propagate,
and \simplify, each of which can be applied to pipes that satisfy
certain conditions.  If an operation is applicable, it produces
an equivalent instance $I'$ of \pqplan in linear time. Secondly we show
that if none of the operations is applicable, then $I$ has no pipes,
and we give a simple linear-time algorithm for computing a valid
embedding in this case.  The third ingredient is a non-negative
potential function $\phi$ for instances of \pqplan.  We show that it
is upper-bounded by $2m$, and that each of the three operations
decreases it by at least~1.

Our algorithm is therefore extremely simple; namely, while the instance
still has a pipe, apply one of the operations to decrease the potential.
Since the potential function is initially bounded by $2m$, at most $2m$
operations are applied, each taking $O(m)$ time. We will show that
the resulting instance without pipes has size $O(m^2)$ and can be solved
in linear time, thus the total running time is $O(m^2)$.

\ifthenelse{\boolean{long}}{
  \subsubsection{Conversion of small-degree P-vertices}\label{deg-lt-3-to-q}
}{
  \mysection{Conversion of small-degree P-vertices.}\label{deg-lt-3-to-q}
}
The main difficulty in \pqplan stems from matched P-vertices. However,
P-vertices of degree up to~3 behave like Q-vertices in the sense that their
rotations are unique up to reversal.  Throughout this paper, we implicitly
assume that P-vertices of degree less than 4 are converted into Q-vertices,
also converting a pipe of degree less than 4 into a Q-constraint\ifthenelse{\boolean{long}}
{, using the auxiliary operation \convert described in the following.}
{; see the full version for additional details.}
We therefore assume without loss of generality that P-vertices, and in particular
pipes, have degree at least~4.

\newcommand{\secConvert}{
Vertices of degree~3 have only two distinct rotations, which are the
reverse of each other. Vertices of degree less than~3 have a unique
rotation, which coincides with its reverse. We thus define operation
\convertop to convert a P-vertex $u$ with $\deg(u) < 4$ into a
Q-vertex, resulting in an instance $\pqplanvars[']$ where $G'=(P' \cup
Q', E)$. If~$u$ is unmatched, we set $P'=P\setminus\{u\}$, $\mathcal
P'=\mathcal P$, and $Q'=Q\cup\{u\}$ and give $u$ it its own cell in
$\mathcal Q'=\mathcal Q \cup \left\{\{u\}\right\}$. We fix an
arbitrary order $\psi'(u)$ and let $\psi'$ coincide with $\psi$ for
all other vertices. If the P-vertex~$u$ is matched with another
P-vertex $v$, we can convert both of them to Q-vertices setting
$P'=P\setminus\{u,v\},$ $Q'=Q\cup\{u,v\}$, and $\mathcal P'=\mathcal
P\setminus\{(u,v,\varphi_{uv})\}$. We also put both together in a cell
in $\mathcal Q'= \mathcal Q \cup \left\{\{u,v\}\right\}$, again
setting $\psi'(u)$ as before, but now also defining $\psi'(v) =
\overline{\varphi_{uv}(\psi'(u))}$. Note that this enforces that
matched vertices $u$ and $v$ have opposite rotations under the
bijection~$\varphi_{uv}$. All other P-vertices and their pipes remain
unaffected. Previous Q-vertices already in $Q$ are also unaffected.

\begin{lemma}\label{lem:convert-small-corr}
Applying \convert to a P-vertex $u$ with $\deg(u) < 4$ yields an equivalent instance in constant time.
\end{lemma}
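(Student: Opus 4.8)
The plan is to prove the two directions of equivalence separately, namely that $I$ admits a valid embedding if and only if $I'$ does, and then to argue the running time. The key structural fact that makes everything work is that a vertex $u$ of degree at most $3$ has exactly the rotations $\{\sigma, \overline{\sigma}\}$ for any fixed cyclic order $\sigma$ of its incident edges (and these two coincide when $\deg(u) \le 2$); thus whatever embedding constraints the rest of the instance imposes on $u$, the only freedom at $u$ is a single binary flip, which is precisely what a singleton Q-cell models.

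\textbf{Unmatched case.} Suppose $u$ is unmatched with $\deg(u) < 4$. Any planar embedding $\mathcal E$ of $G$ is a planar embedding of $G' = G$ (the graphs are literally equal), and conversely. Since $u$ does not appear in any pipe of $\mathcal P = \mathcal P'$ and in no cell of $\mathcal Q$, the constraints of $I$ are satisfied by $\mathcal E$ iff the constraints of $I'$ are satisfied, \emph{except} that $I'$ additionally requires the new cell $\{u\}$ to be satisfied, i.e. $\mathcal E(u) \in \{\psi'(u), \overline{\psi'(u)}\}$. But this is automatic: in any planar embedding, $\mathcal E(u)$ is \emph{some} cyclic order of the $\deg(u) \le 3$ edges at $u$, and every such cyclic order equals $\psi'(u)$ or $\overline{\psi'(u)}$. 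Hence the valid embeddings of $I$ and of $I'$ coincide.

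\textbf{Matched case.} Suppose $u$ is matched with $v$ via $(u,v,\varphi_{uv}) \in \mathcal P$, with $\deg(u) = \deg(v) < 4$. Again $G' = G$ and $\mathcal P' = \mathcal P \setminus \{(u,v,\varphi_{uv})\}$, while $\mathcal Q'$ gains the cell $\{u,v\}$ with $\psi'(v) = \overline{\varphi_{uv}(\psi'(u))}$. I claim that for any planar embedding $\mathcal E$ of $G$, the pipe $(u,v,\varphi_{uv})$ is satisfied iff the cell $\{u,v\}$ is satisfied. First note $\mathcal E(u) \in \{\psi'(u), \overline{\psi'(u)}\}$ and $\mathcal E(v) \in \{\psi'(v), \overline{\psi'(v)}\}$ automatically, as in the unmatched case. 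Now $\varphi_{uv}$ is a bijection, so it commutes with reversal: $\varphi_{uv}(\overline{\sigma}) = \overline{\varphi_{uv}(\sigma)}$. If the cell is satisfied with both $u,v$ in default orientation, then $\mathcal E(u) = \psi'(u)$ and $\mathcal E(v) = \psi'(v) = \overline{\varphi_{uv}(\psi'(u))}$, so $\varphi_{uv}(\mathcal E(u)) = \varphi_{uv}(\psi'(u)) = \overline{\mathcal E(v)}$, i.e. the pipe is satisfied; the case where both are reversed is symmetric, using the commutation identity. Conversely, if the pipe is satisfied, i.e. $\varphi_{uv}(\mathcal E(u)) = \overline{\mathcal E(v)}$, then plugging in $\mathcal E(u) = \psi'(u)$ forces $\mathcal E(v) = \overline{\varphi_{uv}(\psi'(u))} = \psi'(v)$ (both default), while $\mathcal E(u) = \overline{\psi'(u)}$ forces $\mathcal E(v) = \overline{\psi'(v)}$ (both reversed); either way the cell is satisfied. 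All other pipes, all other cells, and all other vertices are untouched by the transformation, so valid embeddings of $I$ and $I'$ again coincide. This also covers the degenerate subcase $\deg(u) \le 2$, where $\psi'(u) = \overline{\psi'(u)}$ and both pipe and cell are trivially satisfied by every embedding.

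\textbf{Running time.} The operation only manipulates a constant number of set memberships (moving $u$, and possibly $v$, from $P$ to $Q$; removing one pipe from $\mathcal P$; adding one cell to $\mathcal Q$) and reads off a fixed order $\psi'(u)$ of the at most $3$ edges at $u$, together with computing $\psi'(v) = \overline{\varphi_{uv}(\psi'(u))}$ from the at most $3$ values of $\varphi_{uv}$. With the assumed data structures (doubly-linked lists with element pointers), each of these is $O(1)$, so the whole operation runs in constant time. The main thing to get right is not any single step — each is routine — but making the commutation-with-reversal argument for $\varphi_{uv}$ fully explicit in the matched case, since that is where the choice $\psi'(v) = \overline{\varphi_{uv}(\psi'(u))}$ is doing real work, and checking that the unconditional membership $\mathcal E(u) \in \{\psi'(u), \overline{\psi'(u)}\}$ is invoked correctly so that no genuine embedding is lost when the cell constraint is added.
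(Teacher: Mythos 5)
Your proof is correct and follows essentially the same route as the paper's: the only freedom at a vertex of degree less than $4$ is reversal, so the singleton cell is automatically satisfied in the unmatched case, and the choice $\psi'(v)=\overline{\varphi_{uv}(\psi'(u))}$ makes the new cell constraint equivalent to the removed pipe in the matched case. Your explicit use of the commutation of $\varphi_{uv}$ with reversal just spells out what the paper's ``inserting shows'' step does implicitly.
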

\begin{proof}
First we show that the conversion preserves a valid embedding
$\mathcal{E}$ of $I$. It is $\mathcal{E}(u)=\psi'(u)$ or
$\mathcal{E}(u)=\overline{\psi'(u)}$. If $u$ is unmatched, it is the
only Q-vertex in its cell and thus $\mathcal{E}$ also satisfies
the Q-constraints of $I'$. Otherwise, $u$ is matched with P-vertex
$v$ and as $\mathcal E$ satisfies the pipe $uv$ it is
$\varphi_{uv}(\mathcal E(u)) = \overline{\mathcal E(v)}$. If
$\mathcal{E}(u)=\psi'(u)$, we get $\psi'(v) =
\overline{\varphi_{uv}(\psi'(u))} =
\overline{\varphi_{uv}(\mathcal{E}(u))} = \mathcal{E}(v)$, satisfying
the new Q-constraint. The case of $\mathcal{E}(u)=\overline{\psi'(u)}$
follows analogously. As the underlying graph and all other pipes
remain unchanged, $\mathcal{E}$ is valid embedding of $I'$.

Conversely, assume that $\mathcal{E}'$ is a valid embedding for $I'$.
If $u$ is the sole vertex in its cell, converting $u$ to a
P-vertex will not affect the validity of the embedding (and also not
allow new embeddings as $\deg(u)<4$). If $u$ shares its cell with
vertex $v$, it is $\mathcal{E}'(u)=\psi'(u)$ and
$\mathcal{E}'(v)=\psi'(v)$ or $\mathcal{E}'(u)=\overline{\psi'(u)}$
and $\mathcal{E}'(v)=\overline{\psi'(v)}$. As we chose $\psi'(v) =
\overline{\varphi_{uv}(\psi'(u))}$, inserting shows that both cases
satisfy the constraint $\varphi_{uv}(\mathcal E(u)) =
\overline{\mathcal E(v)}$ of pipe $uv$. As the underlying graph, all
other pipes and Q-constraints remain unchanged, $\mathcal{E}'$ is
valid embedding of $I$.

This concludes the proof of correctness of \convert. As all affected
vertices have degree at most 3, the time required to execute the
operation is constant.
\end{proof}}
\ifthenelse{\boolean{long}}{
  \secConvert
}{}

\subsection{The \contractTitle Operation}\label{sec:contract}

The purpose of the \contract operation is to communicate embedding
restrictions between two matched cut-vertices in
two steps: First we encapsulate the cut-vertices into their own
independent star components, also disconnecting their incident
blocks from each other. In the second step, we join the stars.
\Cref{fig:encapcut-p-matched,fig:contract-grid} show an example.

\begin{figure}[t]
  \begin{subfigure}[t]{0.35\textwidth}
    \centering
    \includegraphics[page=1]{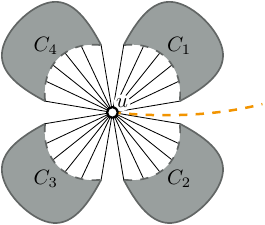}
    \vspace{-0.2cm}
    \caption{}
    \label{fig:encapcut-p-matched-pre}
  \end{subfigure}
  \hfill
  \begin{subfigure}[t]{0.6\textwidth}
    \centering
    \includegraphics[page=2]{graphics/encapcut.pdf}
    \vspace{-0.2cm}
    \caption{}
    \label{fig:encapcut-p-matched-post}
  \end{subfigure}
  \caption{A matched cut-vertex (a) and the result of encapsulating 
    it (b).}
  \label{fig:encapcut-p-matched}
\end{figure}

\begin{figure}[t]
  \begin{subfigure}[t]{0.44\textwidth}
    \includegraphics[scale=1,page=3]{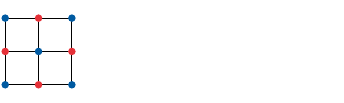}
    \vspace{-0.1cm}
    \caption{}
    \label{fig:contract-grid-pre}
  \end{subfigure}
  \hfill
  \begin{subfigure}[t]{0.44\textwidth}
    \includegraphics[scale=1,page=2]{graphics/grid.pdf}
    \vspace{-0.1cm}
    \caption{}
    \label{fig:contract-grid-post}
  \end{subfigure}
  \hfill
  \begin{subfigure}[t]{0.1\textwidth}
    \includegraphics[scale=1,page=1]{graphics/grid.pdf}\vspace{-0.1cm}\caption{}\label{fig:contract-grid-res}
  \end{subfigure}
  \caption{Two (encapsulated) matched cut-vertices (a). Depending on
    the mapping $\varphi$, any bipartite graph can result from
    joining them. For example, the graph (b) can result,
    which is isomorphic to the square grid graph shown in (c).}
  \label{fig:contract-grid}
\end{figure}

For an instance \pqplanvars of \pqplan, let $\rho=(u,v,\varphi_{uv})$ be
a pipe matching two cut-vertices $u,v$ of two (not necessarily distinct)
connected components $C_u,C_v$ of $G$. Operation \contractop can be applied
resulting in an instance \pqplanvars['] using the following two steps.
We first preprocess both cut-vertices to \emph{encapsulate} them into
their own separate star components. Let~$C_1,\dots,C_k$ be the
connected components of $C_u-u$.  We split $C_u$ along the cuts
$(V(C_i),V \setminus V(C_i))$ for $i=1,\dots,k$.  We denote the vertices
resulting from the split along $(V(C_i), V \setminus V(C_i))$ as $u_i$
and~$u_i'$, where~$u_i$ results from contracting~$V \setminus V(C_i)$
and~$u_i'$ results from contracting~$V(C_i)$.  Note that, after all
splits, $u$ is the center of a star~$C_u'$ whose ray vertices are the $u_i'$. 
We add the pipes~$(u_i,u_i',\varphi_{u_iu_i'})$ for $i=1,\dots,k$; see
\Cref{fig:encapcut-p-matched}.  The same procedure is also applied to
$v$, resulting in an intermediate instance $I^\ast$.
In the second step, we \emph{join} the connected components~$C_u'$
and~$C_v'$ at~$u$ and~$v$ along the mapping~$\varphi_{uv}$ of $\rho$
into a component~$C_{uv}$.  We also remove the
pipe~$\rho$ from $I^\ast$; all other parts of the instance remain
unchanged.  \Cref{fig:contract-grid} shows a possible result of joining
two stars.

\begin{lemma}\label{lem:contract-corr}
  Applying \contract to a pipe $\rho$ yields an equivalent instance in
  $O(\vdeg(\rho))$ time.
\end{lemma}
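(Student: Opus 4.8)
\ I would factor the operation into its two phases — \emph{encapsulating} the two cut-vertices, then \emph{joining} the resulting \pstars — and establish equivalence for each in turn. Writing $I^\ast$ for the intermediate instance after encapsulating both $u$ and $v$ but before joining, it suffices to show that $I$ has a valid embedding iff $I^\ast$ does, and that $I^\ast$ has one iff $I'$ does; along the way one also checks that $I'$ is a legitimate instance (each freshly created vertex lies in exactly one new pipe, and $u,v$, which the join removes, lay only in $\rho$). The two main tools are the split/join machinery of the preliminaries — joining two embeddings compatible with a bijection yields a planar embedding in which every vertex other than the two merged-away ones keeps its rotation, and splitting along a cut that an embedding respects preserves that embedding and produces compatible embeddings of the pieces — together with \Cref{lem:biconnected-all-embeddings-respect-cut,lem:bipartite-all-embeddings-respect-cut}. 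The recurring point is to verify, at each phase, that the only rotations that change are those of vertices that are created or removed, so that every surviving pipe and every $Q$-constraint is automatically preserved.

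For the encapsulation phase, encapsulating the single cut-vertex $u$ is just the sequence of splits of $C_u$ along the cuts $(V(C_i),V\setminus V(C_i))$, each followed by inserting the pipe $(u_i,u_i',\varphi_{u_iu_i'})$ for the natural bijection induced by that split. Inside the component currently containing $u$, both sides of each such cut induce connected subgraphs — $C_i$ by definition, and the other side because all its pieces remain attached to $u$ — so by \Cref{lem:biconnected-all-embeddings-respect-cut} every planar embedding respects it. Hence a valid embedding of the current instance splits into compatible embeddings of the two pieces, which is exactly what the new pipe demands, while every original vertex keeps its rotation; conversely, a valid embedding of the split instance satisfies the new pipe, hence carries compatible embeddings on the pieces, which join back to a planar embedding in which only $u_i,u_i'$ vanish. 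Iterating over $i$ encapsulates $u$, and repeating the argument inside the component that now contains $v$ (which after encapsulating $u$ is one of the pieces $C_i+u_i$, disjoint from the \pstar $C_u'$ regardless of whether $C_u=C_v$) encapsulates $v$. Since the rotations of $u$ and $v$ are literally the same cyclic orders of the same edge objects before and after, $\rho$ itself is untouched, giving $I\equiv I^\ast$.

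For the join phase the forward direction is immediate: a valid embedding $\mathcal E^\ast$ of $I^\ast$ satisfies $\rho$, i.e.\ $\varphi_{uv}(\mathcal E^\ast(u))=\overline{\mathcal E^\ast(v)}$, so its restrictions to the \pstars $C_u'$ and $C_v'$ are compatible with $\varphi_{uv}$ and can be joined into a planar embedding of $C_{uv}$; the only rotations that change are those of the removed vertices $u,v$, which lie in no $Q$-constraint, so the result is valid for $I'$. The reverse direction is the delicate one, and the step I expect to be the main obstacle: given a valid embedding $\mathcal E'$ of $I'$, I must show its restriction to $C_{uv}$ respects the cut separating the former rays $\{u_1',\dots,u_k'\}$ of the \pstar at $u$ from the former rays $\{v_1',\dots,v_\ell'\}$ of the \pstar at $v$. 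These two sets are independent in $C_{uv}$, so \Cref{lem:biconnected-all-embeddings-respect-cut} is of no direct use; instead $C_{uv}$ is a loop-free bipartite multigraph with exactly these two sides as its parts, so \Cref{lem:bipartite-all-embeddings-respect-cut} applies. Splitting $C_{uv}$ along this cut then reproduces exactly $C_u'$ and $C_v'$ — split and join being inverse — with compatible embeddings, so $\rho$ holds; reattaching the remaining components and noting that only the re-introduced $u,v$ get a new rotation while every $u_i',v_j'$ is unchanged yields a valid embedding of $I^\ast$. Together the phases give $I\equiv I'$.

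For the running time, put $d=\vdeg(\rho)=\vdeg(u)=\vdeg(v)$; then $u$ has at most $d$ incident edges, $C_u-u$ has at most $d$ components, and encapsulation leaves these degrees unchanged. Each split only detaches a group of $u$'s incident edges, re-points their $u$-end to a fresh $u_i$, attaches a matching bundle of fresh edges from $u$ to a new ray $u_i'$, and installs $(u_i,u_i',\varphi_{u_iu_i'})$; over all $i$ the edges touched are exactly the $\le d$ edges at $u$, the interiors of the $C_i$ are never explored, and the new pipe bijections have total size $O(d)$, so encapsulating $u$ — and likewise $v$ — is $O(d)$, and joining the two \pstars rewires $\le d$ edges and removes $\rho$ in $O(1)$. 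The one part that needs care is identifying, for each edge incident to $u$ (and to $v$), which component of $C_u-u$ (resp.\ $C_v-v$) it enters; this partition can be read off a block-cut tree of the respective component in $O(d)$ time, and keeping such a structure alongside the instance is natural anyway, since recognising $u$ and $v$ as cut-vertices — a precondition of \contract — needs essentially the same information. Altogether the operation runs in $O(\vdeg(\rho))$ time.
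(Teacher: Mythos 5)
Your proposal is correct and follows essentially the same route as the paper: factor \contract into encapsulation and join, use \Cref{lem:biconnected-all-embeddings-respect-cut} to show every valid embedding respects the cuts $(V(C_i),V\setminus V(C_i))$ in the encapsulation phase, and use the bipartiteness of $C_{uv}$ together with \Cref{lem:bipartite-all-embeddings-respect-cut} for the reverse direction of the join. Your treatment is in fact somewhat more careful than the paper's on the side points (connectedness of both sides of each cut within $C_u$, the case $C_u=C_v$, and how to locate the components of $C_u-u$ in $O(\vdeg(\rho))$ time), but the substance is identical.
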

\begin{proof}
  By~\Cref{lem:biconnected-all-embeddings-respect-cut}, a valid embedding
  $\mathcal E$ of an instance $I$ respects each of the cuts $(V(C_i),V \setminus
  V(C_i))$ for $i=1,\dots,k$, yielding a planar embedding~$\mathcal
  E^\ast$ of $I^\ast$.  By construction, it is~$\mathcal E^\ast(u_i) =
  \overline{\mathcal E^\ast(u_i')}$ for $i=1,\dots,k$, that is, each new
  pipe~$(u_i,u_i',\varphi_{u_iu_i'})$ is satisfied and~$\mathcal E^\ast$
  is a valid embedding of $I^\ast$.
Conversely, if~$\mathcal E^\ast$ is a valid embedding of $I^\ast$, we
  can join $u_i$ with $u_i'$ for $i=1,\dots,k$ to obtain a valid planar
  embedding~$\mathcal E$ of $I$, as the pipe
  $(u_i,u_i',\varphi_{u_iu_i'})$ ensures that $\mathcal E^\ast$ is
  compatible with $\varphi_{u_iu_i'}$.
The same applies to $C_v$.

  If~$\mathcal E^\ast$ is a valid embedding for $I^\ast$, it satisfies
  the pipe $(u,v,\varphi_{uv})$ and we can join the embedding at $u$
  and~$v$ via~$\varphi_{uv}$ to obtain a planar embedding $\mathcal E'$
  of~$G'$.  Since the rotations of vertices different from $u,v$ are
  unaffected, $\mathcal E'$ is valid for $I'$. Conversely, assume that
  $\mathcal E'$ is a valid embedding for $I'$. Note that joining two
  stars at their centers yields a bipartite graph consisting of the rays
  of the former stars.  Thus~$C_{uv}$ is bipartite, and
  by~\Cref{lem:bipartite-all-embeddings-respect-cut} every embedding
  respects the cut of the bipartition. Thus, we can split $\mathcal E'$
  and obtain a valid embedding of $I^\ast$.
  
  As the operation affects exactly the edges incident to $u$ and $v$ and
  potentially creates a new structure with size proportional to their
  number, its running time is linear in the degree of the affected
  pipe.
\end{proof}

Observe that this operation replaces a pipe and two cut-vertices by
smaller pipes and smaller cut-vertices, respectively. Through multiple
applications of \contract we can thus step by step decrease the degree
of cut-vertex-to-cut-vertex pipes, until there are none left in the
instance.
Note that \contract can yield an arbitrary bipartite component.
If the component is non-planar, we abort and report a no-instance.

\subsection{The \propagateTitle Operation}\label{sec:propagate}

The operation \propagate communicates embedding restrictions of a
biconnected component across a pipe. These restrictions are represented
by the embedding tree of the matched P-vertex of interest.  Both
endpoints of the pipe are replaced by copies of this tree.  To ensure
that both copies are embedded in a compatible way, we synchronize their
inner nodes using pipes and Q-constraints; see \Cref{fig:propagate}.

\begin{figure}[t]
  \begin{subfigure}[t]{0.41\textwidth}
    \centering
    \includegraphics[page=1]{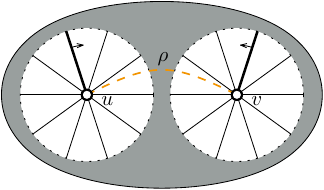}
    \vspace{-0.3cm}
    \caption{}
    \label{fig:propagate-pre}
  \end{subfigure}
  \begin{subfigure}[t]{0.59\textwidth}
    \centering
    \includegraphics[page=2]{graphics/propagate.pdf}
    \vspace{-0.3cm}
    \caption{}
    \label{fig:propagate-post}
  \end{subfigure}
  \caption{A block-vertex $u$ matched with vertex $v$ (a); the
    bijection $\varphi_{uv}$ maps the bold edge of $u$ to the bold edge
    of $v$, the remaining edges are mapped according to their
    order, clockwise around $u$ and counter-clockwise around
    $v$.  The result of applying \propagateop (b). Note that the
    second inserted tree $T_u'$ is mirrored with respect to
    $T_u$. Q-vertices and -nodes are drawn as squares while
    P-vertices and -nodes are drawn as disks.
  }
  \label{fig:propagate}
\end{figure}

For an instance \pqplanvars of \pqplan, let $u$ be a block-vertex
matched by a pipe $\rho=(u,v,\varphi_{uv})$.  If the
embedding tree~$\mathcal T_u$ of $u$ is non-trivial (i.e., it not only
consists of a single P-node), then the operation \propagateop can be
applied, resulting in an instance \pqplanvars['] as follows.
We turn the PQ-tree $\mathcal{T}_u$ into a tree $T_u$ by interpreting
Q-nodes as Q-vertices and P-nodes as P-vertices.  To construct $G'$ from
$G$, we replace~$u$ with~$T_u$ by reconnecting the incident edges of $u$
to the respective leaves of $T_u$.  We also replace~$v$ by a second copy
$T_u'$ of $T_u$ by reconnecting an edge $e$ incident to $v$ to the leaf
of $T_u'$ that corresponds to $\varphi_{vu}(e)$.
For a vertex $\alpha$ of $T_u$ we denote the corresponding vertex of
$T_u'$ by $\alpha'$. For an edge $\alpha\beta$ of $T_u$ we define
$\varphi_{T_uT_u'}(\alpha\beta)=\alpha'\beta'$.
For each Q-vertex $\alpha$ of $T_u$, we define $\psi'(\alpha)$
according to the rotation of the corresponding Q-node in~
$\mathcal{T}_u$.  For the Q-vertex $\alpha'$ of $T_u'$, we define
$\psi'(\alpha') = \overline{\varphi_{T_uT_u'}(\psi'(\alpha))}$. For
all other Q-vertices of~$I$, $\psi'$ coincides with $\psi$. We define
the partition $\mathcal Q' = \mathcal Q \cup \left\{  \{\alpha,
\alpha'\} \mid \alpha \text{ is a Q-vertex of } T_u \right\}$.
For each P-vertex $\alpha$ of $T_u$, we define a pipe $\rho_\alpha = 
(\alpha, \alpha', \varphi_{\alpha\alpha'})$ with
$\varphi_{\alpha\alpha'}(e)=\varphi_{T_uT_u'}(e)$ for each edge~$e$
incident to  $\alpha$. Finally, we define the matching $\mathcal P' = 
(\mathcal P \setminus \{\rho\}) \cup \left\{  \rho_\alpha \mid \alpha
\text{ is a P-vertex of } T_u \right\}$.

\begin{restatable}{lemma}{lemPropagate}
  Applying \propagate to a block-vertex $u$ with a non-trivial embedding
  tree yields an equivalent instance.
If the embedding tree $\mathcal{T}_u$ is known, 
  operation \propagate runs in $O(\vdeg(u))$ time.
\end{restatable}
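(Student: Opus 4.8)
The plan is to establish the equivalence by showing that valid embeddings of $I$ correspond bijectively to valid embeddings of $I'$, using the split/join machinery from the preliminaries together with the defining property of embedding trees. The key observation is that $\mathcal{T}_u$, viewed as the tree $T_u$ with Q-nodes realized as wheels (as described in the preliminaries), is constructed precisely so that its planar embeddings, restricted to the rotation of the outer leaves, realize exactly the admissible rotations of $u$ in any planar embedding of the block containing $u$. Since $u$ is a block-vertex, it lies in a unique block, and the embedding tree captures all and only the rotations of $u$ attainable in planar embeddings of $G$.

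First I would argue the forward direction: given a valid embedding $\mathcal{E}$ of $I$, the rotation $\mathcal{E}(u)$ is one of the orders represented by $\mathcal{T}_u$, so there is a planar embedding $\mathcal{E}_{T_u}$ of $T_u$ inducing that rotation on the leaves. Splitting $G$ at the cut separating the (unique) block of $u$ — or more precisely replacing $u$ by $T_u$ — is legitimate because this cut is respected by $\mathcal{E}$ (here $T_u$ is biconnected once Q-nodes are wheels, so Lemma~\ref{lem:biconnected-all-embeddings-respect-cut} applies to the relevant sub-cut); the same works at $v$ with the mirrored copy $T_u'$. The mirroring is exactly what makes the two copies compatible across the original pipe $\rho$: since $\mathcal{E}$ satisfied $\varphi_{uv}(\mathcal{E}(u)) = \overline{\mathcal{E}(v)}$, the induced embeddings of $T_u$ and $T_u'$ have opposite orientations, which is what the choice $\psi'(\alpha') = \overline{\varphi_{T_uT_u'}(\psi'(\alpha))}$ and the new pipes $\rho_\alpha$ encode. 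One then checks that each new Q-constraint $\{\alpha,\alpha'\}$ and each new pipe $\rho_\alpha$ is satisfied, because corresponding inner nodes of the two copies have reversed rotations, and all untouched vertices keep their rotations. Conversely, given a valid embedding $\mathcal{E}'$ of $I'$: the new pipes and Q-constraints force the two copies $T_u$ and $T_u'$ to be embedded as mirror images of each other; joining each copy back into a single vertex (using that the cut around $T_u$ inside its component is respected, again via Lemma~\ref{lem:biconnected-all-embeddings-respect-cut} applied to the biconnected wheel-expanded tree, or Lemma~\ref{lem:bipartite-all-embeddings-respect-cut}) yields a planar embedding of $G$ in which $u$ has a rotation represented by $\mathcal{T}_u$ — hence an attainable rotation — and $v$ has the mirrored rotation, so $\rho$ is satisfied. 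All other constraints transfer verbatim.

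The main obstacle I anticipate is the careful bookkeeping of \emph{why} the new P-constraints and Q-constraints together are exactly equivalent to requiring that the two copies of the embedding tree receive mirrored embeddings, and — crucially — that mirrored embeddings of the two tree copies are in bijection with pairs $(\sigma, \overline{\sigma})$ where $\sigma$ ranges over rotations admissible for $u$. This hinges on the standard fact that a PQ-tree's internal P-node choices and Q-node flips independently parametrize its leaf orders, so fixing the rotation at every inner node of one copy and forcing the other copy's inner nodes to the reverse rotation pins down both leaf orders as reverses of each other; conversely every admissible leaf order arises this way. Establishing this cleanly, rather than by appeal to intuition, is the delicate part; the rest is an application of the split/join lemmas and routine verification that untouched constraints are preserved.

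For the running time: once $\mathcal{T}_u$ is given, constructing $T_u$ and its mirror copy $T_u'$, rewiring the $\vdeg(u)$ incident edges of $u$ and of $v$ to the leaves, and adding one pipe or Q-constraint per inner node all take time linear in the size of $\mathcal{T}_u$, which is $O(\vdeg(u))$ since a PQ-tree on $\vdeg(u)$ leaves has $O(\vdeg(u))$ nodes and edges. Hence the operation runs in $O(\vdeg(u))$ time.
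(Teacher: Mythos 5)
Your proposal is correct and follows essentially the same route as the paper: the forward direction inserts suitably embedded (mirrored) copies of $T_u$ at $u$ and $v$ using the fact that $\mathcal{T}_u$ represents all admissible rotations of $u$, the backward direction contracts the two tree copies and uses the fact that reversed inner-node rotations force reversed leaf orders to recover satisfaction of $\rho$, and the running time follows from the tree having $O(\vdeg(u))$ size. The only cosmetic difference is your appeal to the split/join lemmas to justify the substitution and contraction steps, which the paper handles more directly (inserting a tree realizing an admissible leaf order, and contracting a connected subgraph, both trivially preserve planarity), but this does not change the argument.
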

\newcommand{\proofPropagate}{
\begin{proof}
First we show that \propagate preserves a valid embedding
$\mathcal{E}$ of $I$. To define an embedding $\mathcal{E}'$ for $I'$,
we substitute $u$ and $v$ in $\mathcal{E}$ by suitably embedded trees
$T_u$ and $T_u'$.
The tree inserted at $u$ represents all the possible rotations of $u$,
including $\mathcal{E}(u)$ and the insertion can therefore be done
without introducing crossings. As $\mathcal{E}$ fulfilled the
P-constraint of the pipe $uv$, we know that
$\mathcal{E}(u)=\overline{\varphi_{vu}(\mathcal{E}(v))}$. Therefore,
the same holds for inserting the mirrored copy $T_u'$ of $T_u$ instead
of $v$. Thus, the resulting embedding $\mathcal{E}'$ is planar. Note
that, the mirror embedding of $T_u$ is obtained by reversing the
rotation of each inner vertex of $T_u$. Therefore, for each inner
vertex $\alpha$ of $T_u$ it holds that
$\mathcal{E}'(\alpha)=\overline{\varphi_{\alpha'\alpha}(\mathcal{E}'(\alpha'))}$.
Thus, all the new pipes are satisfied.
Similarly, for each inner Q-vertex $\alpha$ of $T_u$ the rotation in
$\mathcal{E}'$ is either $\psi'(\alpha)$ or
$\overline{\psi'(\alpha)}$. As the rotation of $\alpha'$ in
$\mathcal{E}'$ is mirrored, the new Q-constraints are satisfied. Since
all other P- and Q-constraints remain satisfied, $\mathcal{E}'$ is a
valid embedding for $I'$.

Conversely, assume that $\mathcal{E}'$ is a valid embedding for $I'$.
We obtain an embedding $\mathcal{E}$ of $I$ by contracting $T_u$ and
$T_v$ into single vertices $u$ and $v$, respectively. Clearly,
$\mathcal{E}$ is a planar embedding. All Q-constraints of $I$ and also
all pipes except for $\rho$ clearly remain satisfied. It remains to
show that also pipe $\rho$ is satisfied. For each vertex $\alpha$ of
$T_u$ it holds that
$\mathcal{E}'(\alpha)=\overline{\varphi_{\alpha'\alpha}(\mathcal{E}'(\alpha'))}$,
as $\mathcal{E}'$ in particular fulfills all new Q- and P-constraints
of~$I'$. Therefore, the circular order of the leaves of~$T_u$ is the
reverse of the order of the leaves of $T_u'$. This means that
$\mathcal{E}(u)=\overline{\varphi_{vu}(\mathcal{E}(v))}$, that is,
$\rho$ is satisfied.

This concludes the proof of correctness of \propagate. As the
operation affects exactly the edges incident to $u$ and $v$, its
running time is linear in the degree of the affected node, given the
embedding tree $\mathcal T_u$ is known.
\end{proof}}
\ifthenelse{\boolean{long}}{\proofPropagate}{}

Note that the tree $T_u'$ inserted instead of $v$ may not be
compatible with the rotations of $v$. In this case, the component
becomes non-planar, potentially causing the later generation of an
embedding tree to fail.
\ifthenelse{\boolean{long}}{
To detect this early, we can also compute the
embedding tree $\mathcal{T}_v$ of $v$ and intersect $\mathcal{T}_u$
with $\mathcal{T}_v$ before the insertion.  Either way, if the
generation of an embedding tree or the intersection of two embedding
trees fails, we can report a no-instance.
}{We can then immediately report a no-instance.}

Observe that since we assume $\mathcal T_u$ to be non-trivial, the
degrees of all P-vertices in $T_u$ and $T_u'$ are strictly smaller than
the degree of $u$.  Thus, by repeatedly applying \propagate, we
eventually arrive at an equivalent instance where all matched
block-vertices have a trivial embedding tree.
Also note that if $\mathcal{T}_u$ consists of a single Q-node,
\propagate effectively replaces the affected pipe by two Q-vertices in
the same partition. The case where $\mathcal{T}_u$ is trivial and thus
consists of a single P-node is handled by the next operation.

\subsection{The \simplifyTitle Operation}\label{sec:simplify}

The remaining operation is \simplify, which is used to resolve pipes
where one side has no restrictions to be communicated to the other side.
This is the case when one of the two matched vertices is a pole of a
bond that allows arbitrary rotation. We distinguish three cases:
\ref{case:simplify-unmatched}) bonds where one pole can always mimic the
rotation of the other, \ref{case:simplify-toroidal}) bonds where the
pipe synchronizes one pole with the other (similar to the toroidal
instances of Fulek and Tóth~\cite{Fulek2019}), and
\ref{case:simplify-matched}) bonds that link two distinct pipes.

For an instance \pqplanvars of \pqplan, let~$u$ be a block-vertex of $G$
whose embedding tree is trivial and that is matched by a pipe $\rho$.
Then, its embedding is determined by exactly one triconnected component
$\mu$, which is a bond\punctuationfootnote{as a second bond would cause
another P-node in the embedding tree, a rigid would cause a Q-node and
polygons do not affect the embedding trees~\cite[Section
2.5]{Blaesius2011}}. Thus~$u$ is the pole of bond~$\mu$, and we call the
vertex $v$ that is the other pole of~$\mu$ the \emph{partner} of~$u$. If
$v$ is either unmatched or a block-vertex with a trivial embedding tree,
the operation \simplifyop can be applied, resulting in an instance
\pqplanvars['] as follows. Note that, due to the temporary replacement of
Q-vertices by wheels when computing the embedding trees, $v$ cannot be a
Q-vertex, as that would make the PQ-tree of $u$ contain a Q-node.

\begin{compactenum}[(i)]
\item \label[case]{case:simplify-unmatched} If $v$ is an unmatched
  P-vertex (\Cref{fig:simplify-unmatched}), $I'$ is obtained
  from $I$ by removing~$\rho$.

\item \label[case]{case:simplify-toroidal} If $\rho$ matches $u$ with
  $v$, it connects the two poles of the bond $\mu$ (\Cref{fig:simplify-toroidal}).
  Note that the embedding trees of $u$
  and $v$ both contain a P-node of the same degree representing $\mu$
  and the pipe now requires both $u$ and $v$ to have the same degree.
  Thus, as $u$ has a trivial embedding tree, $v$ also has a trivial
  embedding tree. The rotation of the vertices is thus
  exclusively determined by the embedding of the bond and there are
  bijections $\delta_u$ and $\delta_v$ between the edges incident to $u$
  and $v$, respectively, and the virtual edges within the bond. We now
  check that these bijections are compatible with the bijection 
  $\varphi_{uv}$ given by the pipe.
Let $\delta_{vu}=\delta_u^{-1} \circ \delta_v$ be a bijection between
  the edges incident to $v$ and the edges incident to $u$, and let $\pi
  = \varphi_{uv} \circ \delta_{vu}$ be a permutation of the edges
  incident to $v$.
  If all cycles of $\pi$ have the same length, $I'$ is obtained from $I$ by removing~$\rho$\punctuationfootnote{
    If all cycles of $\pi$ have the same length, $\pi$ is \emph{order preserving} and it is $\pi(O)=O$ for any sequence $O$.
    See \cite[Lemma 2.2]{Blaesius2011} or the proof to the following \Cref{lem:simplify} in the full version for more details.
  }.
  Otherwise, $I$ is an invalid instance and we set $I'$ to a trivial no-instance.

\item \label[case]{case:simplify-matched} If $v$ is matched with a
  P-vertex $v' \ne u$ via pipe $\rho'=(v,v',\varphi_{vv'})$, let $u'$ be
  the other endpoint of~$\rho=(u,u',\varphi_{uu'})$.  We remove $\rho$
  and $\rho'$ and add the new pipe $\rho^*=(u',v',\varphi_{u'v'})$ with 
  $\varphi_{u'v'} = \varphi_{vv'} \circ \delta_{uv} \circ
  \varphi_{u'u}$; see \Cref{fig:simplify-matched}.
\end{compactenum}

\begin{figure}[t]
  \begin{subfigure}[t]{0.34\textwidth}
    \centering
    \includegraphics[scale=1,page=2]{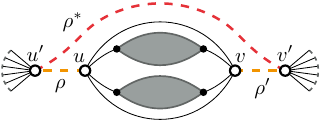}\vspace{-0.2cm}\caption{}\label{fig:simplify-unmatched}
  \end{subfigure}\begin{subfigure}[t]{0.26\textwidth}
    \centering
    \includegraphics[trim=30 0 30 0,clip,scale=1,page=3]{graphics/simplify2.pdf}
    \vspace{-0.2cm}
    \caption{}
    \label{fig:simplify-toroidal}
  \end{subfigure}\begin{subfigure}[t]{0.40\textwidth}
    \centering
    \includegraphics[scale=1,page=1]{graphics/simplify2.pdf}
    \vspace{-0.2cm}
    \caption{}
    \label{fig:simplify-matched}
  \end{subfigure}
  \caption{The three cases of the \simplifyTitle operation. In
    \Cref{case:simplify-unmatched} (a) and
    \Cref{case:simplify-toroidal} (b), the pipe~$\rho$ is
    removed. In \Cref{case:simplify-matched} (c) the
    pipes~$\rho,\rho'$ are replaced by pipe~$\rho^\star$.}
  \label{fig:simplify}
\end{figure}

\begin{restatable}{lemma}{lemSimplify}\label{lem:simplify}
  Applying \simplify to a block-vertex $u$ with a trivial embedding tree yields
  an equivalent instance in $O(\vdeg(u))$ time.

\end{restatable}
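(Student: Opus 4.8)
The plan is to handle the three cases separately, showing in each that the new instance $I'$ admits a valid embedding if and only if $I$ does, and then to argue the running time uniformly. Throughout, the key structural fact is that $u$ has a trivial embedding tree, so its rotation in any planar embedding of (the block containing) $u$ is completely determined — up to reversal — by the rotation of the parallel virtual edges inside the bond $\mu$; equivalently, the bijection $\delta_u$ between the edges incident to $u$ and the virtual edges of $\mu$ lets us read off $\mathcal{E}(u)$ from the bond's embedding, and symmetrically $\delta_v$ does the same for the partner $v$ when $v$ also has a trivial embedding tree. Note also that the rotations of the two poles of a bond are \emph{reverses} of each other under $\delta_{vu} = \delta_u^{-1} \circ \delta_v$, since in a planar embedding the parallel edges of a bond are seen in opposite cyclic orders from the two poles.

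For \Cref{case:simplify-unmatched}, where $v$ is an unmatched P-vertex: removing $\rho$ can only enlarge the set of valid embeddings, so the forward direction is immediate. For the converse, given a valid embedding $\mathcal{E}'$ of $I'$, the only thing that can fail in $I$ is the pipe $\rho = (u,v,\varphi_{uv})$. But since $v$ is an unmatched P-vertex it participates in no other constraint, and since $u$ has a trivial embedding tree, its rotation inside its block is free to be any cyclic order of the virtual edges of $\mu$; choosing that order (equivalently, reordering the parallel edges of $\mu$) to make $v$'s fixed rotation match $\varphi_{uv}^{-1}(\overline{\mathcal{E}'(u)})$ repairs the pipe without disturbing anything else, because reordering parallel edges in a bond does not affect any other vertex's rotation and $v$, being unmatched, imposes no constraint of its own. (One must also check $v$ is itself a pole of a bond or at least that its rotation is adjustable; but since $v$ is a P-vertex of the same bond $\mu$, or more generally its embedding tree permits this flip, this goes through — this is the point to be careful about, and it is exactly why the operation's precondition asks $v$ to be unmatched or to have a trivial embedding tree.)

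For \Cref{case:simplify-toroidal}, where $\rho$ matches the two poles of $\mu$: both $u$ and $v$ have trivial embedding trees, so a valid embedding must (i) choose a cyclic order $\pi_\mu$ of the virtual edges of $\mu$, which via $\delta_u, \delta_v$ forces $\mathcal{E}(u)$ and $\mathcal{E}(v)$, and (ii) satisfy $\varphi_{uv}(\mathcal{E}(u)) = \overline{\mathcal{E}(v)}$. Translating through the bijections, $\varphi_{uv}(\mathcal{E}(u))$ and $\overline{\mathcal{E}(v)}$ are two cyclic orders of the edges incident to $v$ that differ by the permutation $\pi = \varphi_{uv} \circ \delta_{vu}$; a cyclic order and its image under $\pi$ coincide as cyclic orders for \emph{some} choice of the underlying linear order exactly when all cycles of $\pi$ have equal length (this is the standard fact that $\pi$ is then a power of a full rotation, i.e., conjugate to a rotation by $|V|/(\text{cycle length})$ positions). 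Hence if all cycles of $\pi$ have the same length, a satisfying $\pi_\mu$ exists and the rest of the instance is untouched by the pipe, so removing $\rho$ preserves equivalence; if not, no valid embedding of $I$ exists and replacing it with a trivial no-instance (say $G' = K_5$) is correct. The subtle computation here is verifying the "equal cycle lengths" criterion for when a permutation fixes a cyclic order; I expect this to be the main obstacle, and it should be isolated as its own small claim about permutations of cyclically ordered sets.

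For \Cref{case:simplify-matched}, where $v$ is matched to $v' \neq u$ via $\rho'$: the idea is that $u$, having a trivial embedding tree, acts as a "pass-through" — any rotation we need at $u'$ (through $\rho$) can be transported to a rotation of the bond $\mu$ (through $\delta_{uv}$) and hence to a rotation of $v$, which is then transported to $v'$ through $\rho'$. Concretely, one shows that an embedding is valid for $I$ iff it is valid for $I'$ by checking that the composite bijection $\varphi_{u'v'} = \varphi_{vv'} \circ \delta_{uv} \circ \varphi_{u'u}$ correctly encodes the composition of the two original constraints together with the rigid reversal imposed by the bond. The forward direction: given valid $\mathcal{E}$ of $I$, the chain $\varphi_{u'u}(\mathcal{E}(u')) = \overline{\mathcal{E}(u)}$, then $\delta_{uv}$ relates $\mathcal{E}(u)$ and $\overline{\mathcal{E}(v)}$ (bond poles reverse), then $\varphi_{vv'}(\mathcal{E}(v)) = \overline{\mathcal{E}(v')}$; composing gives $\varphi_{u'v'}(\mathcal{E}(u')) = \overline{\mathcal{E}(v')}$, so $\rho^*$ holds, and $\mathcal{E}$ is unchanged elsewhere. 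The converse: given valid $\mathcal{E}'$ of $I'$, we still have $u$ a bond pole with a trivial embedding tree, so we may pick the bond's rotation to realize whatever $\mathcal{E}'(u')$ demands through $\varphi_{u'u}$, which then dictates $\mathcal{E}(v)$ via $\delta_{uv}$, and the constraint $\rho^*$ plus the definition of $\varphi_{u'v'}$ guarantees this is consistent with $\mathcal{E}'(v')$ and hence with $\rho'$; all other constraints are inherited. The care needed is in tracking the reversals (each of $\varphi_{u'u}$, $\delta_{uv}$, $\varphi_{vv'}$ contributes one, and an even total is what makes the directions match) — a sign-bookkeeping argument rather than a hard one.

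Finally, for the running time: in every case the operation only inspects and rewires the $O(\vdeg(u)) = O(\vdeg(\rho))$ edges incident to $u$ and to its partner/the bond $\mu$, computes the composite bijection $\pi$ or $\varphi_{u'v'}$ by composing a constant number of bijections on ground sets of that size, and (in \Cref{case:simplify-toroidal}) decomposes $\pi$ into cycles and compares their lengths — all linear in $\vdeg(u)$. Identifying the bond $\mu$ and the partner pole $v$ is possible in $O(\vdeg(u))$ time once the embedding tree of $u$ is known to be trivial, since that tree directly exposes the single relevant triconnected component. Hence the operation runs in $O(\vdeg(u))$ time, as claimed.
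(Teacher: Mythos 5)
Your cases~(ii) and~(iii), and the running-time argument, follow essentially the same route as the paper's proof: translate the pole rotations through the bijections $\delta_u,\delta_v$ (using that the two poles of the bond see the parallel edges in opposite cyclic orders, i.e.\ $\delta_u(\mathcal E(u))=\overline{\delta_v(\mathcal E(v))}$), rely on the fact that a permutation fixes some circular order if and only if all its cycles have equal length (the paper cites this from prior work rather than reproving it, so deferring it to a separate small claim is fine), and in the converse directions repair the removed pipes by re-embedding only the bond $\mu$, which cannot destroy planarity or any constraint elsewhere. In case~(iii) your bookkeeping of the three reversals matches the paper's computation, and your converse (fix $u$ to satisfy $\rho$, let the bond force $v$, deduce $\rho'$ from $\rho^*$) is equivalent to the paper's (fix both $u$ and $v$, verify planarity from $\rho^*$).

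Your case~(i), however, is garbled in a way that, as written, would fail. If $v$ is unmatched, the pipe cannot be $\rho=(u,v,\varphi_{uv})$: an unmatched vertex lies in no pipe. In case~(i) the pipe $\rho$ joins $u$ to some third vertex $u'\ne v$, and $v$ is merely the other pole of the bond $\mu$. The correct converse argument is: reorder the parallel edges of $\mu$ so that $\mathcal E(u)=\overline{\varphi_{u'u}(\mathcal E(u'))}$, which is possible because $\delta_u$ is a bijection between the edges at $u$ and the virtual edges of $\mu$; the only other rotation affected is that of $v$, which carries no P- or Q-constraint, so the embedding stays valid. Your version instead tries to ``make $v$'s fixed rotation match $\varphi_{uv}^{-1}(\overline{\mathcal E'(u)})$'', i.e.\ it treats the pipe as connecting the two poles of $\mu$ --- but that is exactly case~(ii), where both pole rotations are forced simultaneously by the same bond embedding and a free repair is impossible in general (which is precisely why the equal-cycle-length test is needed there). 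Your parenthetical worry about whether $v$'s rotation is ``adjustable'' is a symptom of the same confusion: nothing at $v$ needs to be realized; one only needs $v$ to be unconstrained.
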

\newcommand{\proofSimplify}{
\begin{proof}
  First we show that if $\mathcal E$ is a valid embedding of $I$, then
  it is also a valid embedding of~$I'$.  This clearly holds if $I'$ is
  obtained from $I$ by removing a pipe, as in
  \Cref{case:simplify-unmatched}.

  It remains to investigate \Cref{case:simplify-toroidal} and \Cref{case:simplify-matched}.
  In the latter case, let~$u'$ be the vertex to which $u$ is matched
  and let~$v' \ne u$ be the vertex to which $v$ is matched.
  We want to show that~$\mathcal E$ satisfies the constraint
  $\mathcal{E}(v') = \overline{\varphi_{u'v'}(\mathcal{E}(u'))}$ of
  the newly added pipe $\rho^*=(u',v',\varphi_{u'v'})$.
  By assumption the pipes $\rho,\rho'$ are satisfied by $\mathcal{E}$,
  that is $\mathcal{E}(u) = \overline{\varphi_{u'u}(\mathcal{E}(u'))}$
  and $\mathcal{E}(v') = \overline{\varphi_{vv'}(\mathcal{E}(v))}$.
  Moreover, as both $u$ and $v$ have a trivial embedding tree, the
  bijections $\delta_u,\delta_v$ are defined as before.  Thus
  $\delta_{u}(\mathcal{E}(u))$ and $\delta_{v}(\mathcal{E}(v))$ define
  a circular orders of the virtual edges of $\mu$.  Since the
  embedding is planar,
  $\delta_{u}(\mathcal{E}(u))=\overline{\delta_{v}(\mathcal{E}(v))}$.
  This yields
  $\mathcal{E}(v) = \overline{\delta_{uv}(\mathcal{E}(u))}$ and thus
\begin{align*}
\mathcal{E}(v') 
&= \overline{\varphi_{vv'}(\overline{\delta_{uv}(                        \mathcal{E}(u)   )})} 
 =           \varphi_{vv'}(          \delta_{uv}(                        \mathcal{E}(u)   ) ) \\
&=           \varphi_{vv'}(          \delta_{uv}(\overline{\varphi_{u'u}(\mathcal{E}(u'))}) ) 
 = \overline{\varphi_{vv'} \circ     \delta_{uv} \circ     \varphi_{u'u}(\mathcal{E}(u'))    }
 = \overline{\varphi_{u'v'}(\mathcal{E}(u'))}\,.
\end{align*}
In particular, this means that $\rho^*$ is satisfied and therefore
$\mathcal{E}$ is valid.

In \Cref{case:simplify-toroidal}, we consider the above permutation
$\pi$.  Analogously to $\mathcal{E}(v) =
\overline{\delta_{uv}(\mathcal{E}(u))}$, it is $\mathcal{E}(u) =
\overline{\delta_{vu}(\mathcal{E}(v))}$, and since $\mathcal E$
satisfies $\rho$, we find that $\pi(\mathcal E(v)) = \varphi_{uv}
\circ \delta_{vu}(\mathcal E(v)) = \varphi_{uv}( \overline{\mathcal
E(u)} ) = \mathcal E(v)$.  All cycles of $\pi$ have the same
length~\cite[Lemma 2.2]{Blaesius2011} and therefore $I'$ is obtained
from $I$ by removing $\rho$ and, in particular, $\mathcal{E}$ is a
valid embedding of $I'$.

Conversely, assume that~$\mathcal{E}'$ is a valid embedding for~$I'$.
To obtain a valid embedding~$\mathcal{E}$ of~$I$, we modify the
embedding $\mathcal{E}'$ by changing the order of the virtual edges of
the bond~$\mu$ with poles~$u$ and~$v$ in such a way that the removed
pipes are satisfied.  Since, compared to~$\mathcal E'$, we only change
the embedding of a bond, $\mathcal E$ is guaranteed to be planar.
The details depend on which case of the operation applies.

If $v$ is unmatched in $I$, we change the embedding of $\mu$ such that
$\mathcal{E}(u)=\overline{\varphi_{u'u}(\mathcal{E}(u'))}$.  This is
possible since there is a bijection between the edges incident to $u$
and the virtual edges of $\mu$.  The new embedding $\mathcal E$
satisfies $\rho$ and, since $v$ is unmatched, also all other pipes of
$I$ remain satisfied.

In \Cref{case:simplify-toroidal}, $v$ is matched with $u$ and we
consider the above permutation $\pi$.  As $\mathcal E'$ is a valid
embedding, $I'$ results from $I$ by removing $\rho$ and we know that
all cycles of $\pi$ have the same length.  There exists a circular
ordering $\sigma$ of the edges incident to $v$ with
$\pi(\sigma)=\sigma$~\cite[Lemma 2.2]{Blaesius2011}. We change the
embedding of $\mu$ such that $\mathcal{E}(v)=\sigma$ and, to retain
planarity, $\mathcal{E}(u) = \overline{\delta_{vu}(\mathcal{E}(v))}$. 
This satisfies $\rho$ as $ \overline{\varphi_{uv}(\mathcal{E}(u))} =
\overline{\varphi_{uv}(\overline{\delta_{vu}(\mathcal{E}(v))}} =
{\varphi_{uv}( {\delta_{vu}(\mathcal{E}(v))}} = \pi(\mathcal{E}(v)) =
\mathcal{E}(v) $ and thus $\mathcal E$ is valid.

In \Cref{case:simplify-matched}, $v$ is matched with $v'\neq u$.  We
obtain $\mathcal E$ from $\mathcal E'$ by setting
$\mathcal{E}(u) = \overline{\varphi_{u'u}(\mathcal{E}(u'))}$ and
$\mathcal{E}(v) = \overline{\varphi_{v'v}(\mathcal{E}(v'))}$.  This
satisfies $\rho$ and $\rho'$, and to verify the planarity of
$\mathcal{E}$ it suffices to show that
$\mathcal{E}(v) = \overline{\delta_{uv}(\mathcal{E}(u))}$.  Since
$\rho^*$ is satisfied by $\mathcal E'$ and $\mathcal E$ differs from
$\mathcal E'$ only at $u$ and $v$, we know that
$\mathcal E(v') = \overline{\varphi_{u'v'}(\mathcal E(u'))} =
\overline{\varphi_{vv'} \circ \delta_{uv} \circ \varphi_{u'u}(\mathcal
  E(u'))}$.  This is equivalent to
$\varphi_{v'v}(\mathcal E(v')) =
\overline{\delta_{uv}(\varphi_{u'u}(\mathcal E(u')))}$.  By the
definitions of $\mathcal E(u)$ and $\mathcal E(v)$, this yields
$\mathcal{E}(v) = \overline{\delta_{uv}(\mathcal{E}(u))}$, that is
$\mathcal E$ is planar.

This concludes the proof of correctness of \simplify.  The updates to
the matching $\mathcal P$ can be done in constant time, while the
updates to the bijection $\varphi$ and the check for cycle lengths
require $O(\vdeg(u))$ time.
\end{proof}}
\ifthenelse{\boolean{long}}{\proofSimplify}{}

\subsection{Reduced and Pipe-Free Instances}

With our exposition of the fundamental operations complete, we now
study how to solve instances where none of those operations can be
applied.  We call such instances \emph{reduced}.

\begin{lemma}\label{lem:reduced-instance}
  An instance is reduced if and only if it contains no pipes.
\end{lemma}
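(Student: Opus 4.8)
The plan is to prove both implications; the backward one is immediate and the forward one is the substance.

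\textit{No pipes $\Rightarrow$ reduced.} Each of the three operations is only defined on an instance that contains a pipe: \contract acts on a pipe between two cut-vertices, and \propagate and \simplify act on a pipe incident to a block-vertex. Hence an instance without pipes is reduced. (We keep \convert applied throughout, so every P-vertex has degree at least~4, which is consistent with the absence of pipes.)

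\textit{Reduced $\Rightarrow$ no pipes.} I would prove the contrapositive: if $I$ contains a pipe then some operation applies. Start with an arbitrary pipe $\rho=(u,v)$. If $u$ and $v$ are both cut-vertices, \contract applies. Otherwise, renaming if necessary, $u$ is a block-vertex; if its embedding tree $\mathcal T_u$ is non-trivial, \propagate applies to $u$. So assume $\mathcal T_u$ is trivial; then, as noted in \Cref{sec:simplify}, the rotation of $u$ is controlled by a single bond $\mu$ whose virtual edges are in bijection with the $\deg(u)\ge 4$ edges of $u$, and the other pole $w$ of $\mu$ (the \emph{partner} of $u$) is a well-defined vertex. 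First observe that $w$ must be a P-vertex: a Q-vertex has its rotation fixed up to reversal, which would force the cyclic order of the parallel bundles of $\mu$, and hence the rotation of $u$, to be fixed up to reversal as well, contradicting triviality of $\mathcal T_u$. Now go through the possibilities for $w$: if $w$ is unmatched then \simplify applies via \Cref{case:simplify-unmatched}; if $w=v$ then \simplify applies via \Cref{case:simplify-toroidal}; if $w$ is matched to a vertex other than $u$ and $w$ is a block-vertex with trivial embedding tree, \simplify applies via \Cref{case:simplify-matched}; and if $w$ is a matched block-vertex with non-trivial embedding tree, \propagate applies to $w$. The only case not yet handled is that $w$ is a matched cut-vertex.

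To close this last case I would argue globally via minimality. Let $S$ be the set of matched block-vertices with trivial embedding tree. Applying the previous paragraph to any pipe of $I$ shows that, unless some operation applies, one endpoint of that pipe lies in $S$; hence if $I$ is reduced and has a pipe then $S\ne\emptyset$. Pick $x\in S$ of minimum degree, let $\mu$ be its controlling bond and $w$ the other pole of $\mu$. By the case analysis above, $w$ is a matched cut-vertex; let $z$ be the vertex matched to $w$. Since \contract does not apply to the pipe matching $w$ and $z$, $z$ is not a cut-vertex, hence a block-vertex; since \propagate does not apply to $z$, its embedding tree is trivial, so $z\in S$. But $\deg(z)=\deg(w)$ because pipes connect vertices of equal degree, and $\deg(w)>\deg(x)$ because, inside the unique block $B$ containing $x$, the $\deg(x)$ virtual edges of $\mu$ are each incident to $w$ (so $\deg_B(w)\ge\deg(x)$) while $w$, being a cut-vertex, also has an edge outside $B$. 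Thus $\deg(z)>\deg(x)$, contradicting the minimality of $x$ in $S$. Hence $I$ has no pipe.

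I expect the crux to be exactly this last configuration --- a pipe endpoint whose bond-partner is itself a matched cut-vertex --- because none of the three operations acts on it directly; the way out is the degree-monotonicity observation, which turns a minimum-degree element of $S$ into one of strictly smaller degree. A smaller technical point to get right is the claim that the bond-partner $w$ is a genuine P-vertex of $G$ (rather than an artifact of the wheel gadgets used to define embedding trees), since that is what makes the sub-cases of \simplify applicable.
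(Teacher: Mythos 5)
Your overall strategy is essentially the paper's: a case analysis showing that any pipe not handled by \contract or \propagate has an endpoint that is a matched block-vertex with trivial embedding tree, followed by an extremal-degree argument based on the inequality $\deg(w) > \deg(x)$ when the bond partner $w$ of such a vertex $x$ is a cut-vertex. That inequality you derive correctly (all $\deg(x)$ edges of $x$ lie in the block shared with $w$, forcing $\deg_B(w)\geq\deg(x)$, and $w$ has at least one further edge outside that block), and your explicit observation that the partner must be a P-vertex is a point the paper leaves implicit.

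However, the concluding step is logically broken as written: you pick $x\in S$ of \emph{minimum} degree, exhibit $z\in S$ with $\deg(z)=\deg(w)>\deg(x)$, and claim this contradicts the minimality of $x$. It does not --- finding an element of $S$ of strictly \emph{larger} degree is perfectly consistent with $x$ having minimum degree in $S$. (Your final remark even asserts that the construction yields an element of strictly smaller degree, which is the opposite of what your own inequality shows.) The argument is repaired by flipping the extremal choice: take $x\in S$ of \emph{maximum} degree, so that $z\in S$ with $\deg(z)>\deg(x)$ is a genuine contradiction; alternatively, iterate the construction and note that a finite graph admits no infinite strictly increasing sequence of degrees. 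This is exactly how the paper closes the argument, choosing a matched cut-vertex $u$ of maximum degree and contradicting maximality via $\deg(u)=\deg(v)<\deg(v')$. So the key idea is present, but the stated contradiction does not follow until you exchange minimum for maximum.
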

\begin{proof}
  Obviously, a pipe-free instance is reduced.
  Conversely, consider a reduced instance $I$.  Assume, for the sake
  of contradiction, that $I$ contains a
  pipe. We now show that this implies that one of the operations is
  applicable, that is, $I$ is not reduced.

  Assume that $I$ contains no matched cut-vertices and thus all
  matched vertices are block-vertices.
If there is a matched P-vertex with a non-trivial embedding tree,
  \propagate can be applied.  Otherwise, all matched P-vertices are
  block-vertices with trivial embedding trees and \simplify can be applied.

  Now let $u$ be a matched cut-vertex of maximum degree that is
  matched to a vertex~$v$ by a pipe~$\rho$.  If~$v$ is also a
  cut-vertex, we can apply \contract.  If~$v$ is a block-vertex with a
  non-trivial embedding tree, we can apply \propagate.  Therefore, $v$
  must be a block-vertex with a trivial embedding tree.  Now we can apply
  \simplify, unless the partner pole $v'$ of $v$ is a matched cut-vertex.
  This is however excluded, since~$\deg(u) = \deg(v) < \deg(v')$,
  contradicting the maximality of $\deg(u)$.  The last inequality
  follows from the fact that~$\deg(v) \leq \deg(v')$ already holds in
  the block of $G$ that contains $v$ and~$v'$, but as~$v'$ is a
  cut-vertex, it has at least one neighbor outside that block.
\end{proof}

To solve instances without pipes in linear time, note that a planar
embedding of such an instance is valid if and only if it satisfies the
Q-constraints.  As Q-vertices only have a binary choice for their
rotation, it is relatively easy to synchronize them via a 2-SAT formula. 
Linear-time algorithms follow from, e.g., \cite{Blaesius2011}, and can
also be obtained from techniques similar to those used by Fulek and
Tóth~\cite{Fulek2019} for cubic graphs. For the sake of completeness, we
present a self-contained solution.

\begin{restatable}{lemma}{lemSolveNoPipes}\label{lem:solve-no-pipes}
  An instance of \pqplan without pipes can be solved in $O(m)$ time.
  A~valid embedding can be computed in the same time, if it exists.
\end{restatable}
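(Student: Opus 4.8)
The plan is to reduce the problem to a 2-SAT instance. Since the instance has no pipes, a planar embedding is valid if and only if it satisfies all Q-constraints, and the only remaining task is to synchronize the rotations of Q-vertices across cells of $\mathcal{Q}$ while staying consistent with the planar embedding choices of the graph. First I would compute, in linear time, the SPQR-trees of all blocks of $G$ (using the decomposition of~\cite{ht-dgtc-73,gm-ltisp-00}); the embedding of a planar biconnected graph is determined by independently choosing a flip for each rigid and an arbitrary permutation of parallel edges in each bond, and cut-vertices additionally allow arbitrary interleaving of the blocks around them. Each Q-vertex $v$ has, by the wheel construction described in the preliminaries, only two admissible rotations, $\psi(v)$ and $\overline{\psi(v)}$; I introduce a Boolean variable $x_v$ that is true iff $v$ has rotation $\psi(v)$.

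Next I would model the Q-constraints and the planarity-induced dependencies as clauses. For each cell $X \in \mathcal{Q}$ and each pair $v,w \in X$, the constraint ``$v$ and $w$ are consistently oriented'' is the equivalence $x_v \leftrightarrow x_w$, which is two 2-clauses; equivalently one can just merge all variables of a cell into one. The subtler part is encoding the interaction with the planar embedding: the relative orientation of two Q-vertices lying in the same rigid is fixed (so their variables are forced equal or opposite depending on the fixed embedding of the rigid), whereas for Q-vertices separated by a bond, a separation pair, or a cut-vertex, the relative orientation is a free binary (or, in the case of bonds with more than two virtual edges and cut-vertices with more than two blocks, a genuinely higher-arity but still freely combinable) choice. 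The key structural fact I would invoke is that these choices decompose along the SPQR-trees and the block-cut-tree into independent binary decisions, so that the orientation of every Q-vertex is an affine ($\mathbb{F}_2$-linear) function of a set of independent binary embedding variables plus the per-rigid fixed offsets. Hence each Q-constraint becomes a linear equation over $\mathbb{F}_2$, or, if one prefers to avoid Gaussian elimination, can be written directly as a conjunction of 2-clauses over the embedding variables and the $x_v$.

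Then I would assemble the 2-SAT formula: variables are the $x_v$ for Q-vertices together with one binary variable per rigid flip, per bond ``is this pair of virtual edges adjacent'' decision is not needed at this granularity since only the cyclic order modulo reversal matters, and per cut-vertex block-ordering decision again only the reversal matters; clauses are (a) the wheel restriction is already baked in, (b) for each Q-vertex, the equation tying $x_v$ to the embedding variables of the blocks/rigids/bonds on the path between a fixed root and $v$, and (c) the equivalences from $\mathcal{Q}$. All of this has size $O(m)$: each Q-vertex contributes $O(1)$ clauses, each cell is handled with a number of clauses linear in its size, and the SPQR- and block-cut-structure has total size $O(m)$. I then solve the 2-SAT instance in linear time~\cite{Booth1976}-style (Aspvall--Plass--Tarjan); if it is satisfiable, I read off the rotations of all Q-vertices and all flips, extend arbitrarily to the P-vertices and to any remaining embedding freedom, and output the resulting planar embedding, all in $O(m)$ time. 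If it is unsatisfiable, report that $I$ is a no-instance.

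The main obstacle I expect is the careful bookkeeping of \emph{relative} orientation: I must argue precisely that ``consistently oriented'' for two Q-vertices translates to a fixed parity relation that is computable in linear time and that the parity relations coming from all of $G$'s embedding freedom are mutually independent so that the encoding is genuinely 2-SAT (equivalently, a system of $\mathbb{F}_2$-linear equations) rather than something harder. This is essentially a reformulation of the well-known ``consistent orientation'' machinery (cf.~\cite{Blaesius2011,Carsten2008}), and since the paper explicitly allows citing those linear-time algorithms, the cleanest write-up is to state the reduction, note that the needed orientation relations are exactly those handled there, and defer the low-level verification to those references while giving the self-contained 2-SAT model above; the remaining steps (clause count, linear-time SPQR and block-cut computation, linear-time 2-SAT, reading back the embedding) are routine.
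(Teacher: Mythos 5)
Your proposal is correct and essentially matches the paper's proof: replace each Q-vertex by a wheel, decompose into triconnected components, and build a linear-size 2-SAT instance with one variable per Q-vertex (its rotation vs.\ $\psi(v)$) and one per rigid flip, plus equivalence clauses within each cell of $\mathcal{Q}$, solved in linear time. The ``careful bookkeeping of relative orientation'' you worry about dissolves once you note that the wheel replacement confines each Q-vertex together with all its incident edges to a \emph{single} rigid, so $x_v$ is tied to exactly one flip variable $x_\mu$ with a fixed parity --- no composition of choices along SPQR- or block-cut-tree paths, and no bond or cut-vertex variables, are needed.
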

\newcommand{\proofSolveNoPipes}{
\begin{proof}
  We replace each Q-vertex by a wheel of the respective degree. Note that each
  such wheel is biconnected and entirely contained in its rigid triconnected
  component. We now use the decomposition in triconnected components to
  represent all possible planar embeddings. As the wheel-replacements yield an
  instance that is linear in the size of the initial instance, this
  decomposition can be done in $O(m)$ time. If at least one of the rigids has
  no planar embedding, abort and report a non-planar instance. It remains to
  restrict the possible embeddings of the rigids so that all Q-constraints are
  satisfied.

  We construct an instance of 2-SAT, where each solution corresponds to a
  planar embedding $\mathcal{E}$ that is a valid solution for $I$. For every
  Q-vertex $v$ the boolean variable $x_v$ is \texttt{true} if the rotation of
  $v$ in $\mathcal{E}$ is equal to the default rotation of $v$ (i.e. if
  $\mathcal{E}(v)=\psi(v)$).
Additionally, for every Q-constraint cell $Q\in\mathcal Q$ we add a boolean
  variable $x_Q$, and for every Q-vertex $v\in Q$ we add the constraint
  $(x_v\vee\neg x_Q)\wedge(\neg x_v\vee x_Q)$. This ensures that the rotations
  of the Q-vertices are consistent within their cell and thus satisfy the
  Q-constraints.
We still need to ensure that the 2-SAT instance allows only planar embeddings. For
  each rigid $\mu$, we fix one of its two planar embeddings as its default
  embedding $\mathcal E_\mu$ and add another boolean variable $x_\mu$,
  indicating whether $\mathcal E_\mu$ or $\overline{\mathcal E_\mu}$ shall be
  used in $\mathcal E$. Due to the wheel replacement, each Q-vertex~$v$ is
  entirely contained in its rigid $\mu$, which can be found in constant time
  using one of the incident edges. For every Q-vertex $v$ we now add one of
  the following two constraints with regard to its rigid $\mu$:
  either 1) $(x_v\vee\neg x_\mu)\wedge(\neg x_v\vee x_\mu)$ if $\psi(v)=\mathcal E_\mu(v)$, or
  2) $(x_v\vee x_\mu)\wedge(\neg x_v\vee\neg x_\mu)$ if $\psi(v)=\overline{\mathcal E_\mu(v)}$.
  This ensures that the rotation of every Q-vertex is consistent with the
  planar embedding of its rigid.

  In the resulting 2-SAT instance, we have a boolean variable for each
  Q-vertex, Q-constraint and rigid, and four constraints for each Q-vertex. The
  constructed 2-SAT formula thus has size in $O(m)$ and can be solved in linear time.
  If it has no solution, we report an invalid instance and abort. Otherwise,
  we can use $x_\mu$ to decide whether $\mathcal E_\mu$ should be mirrored or
  not. Choosing a planar embedding for each bond, i.e. a permutation of the
  parallel virtual edges between the two poles, this yields a planar
  embedding~$\mathcal{E}$ that is a valid solution for~$I$.
\end{proof}}
\ifthenelse{\boolean{long}}{\proofSolveNoPipes}{}

\subsection{Finding a Reduced Instance}

As mentioned above, we exhaustively apply the operations \contract,
\propagate, and \simplify. We claim that this algorithm terminates and
yields a reduced instance after a polynomial number of steps.  The key
idea is that the operations always
make progress by either reducing the number of pipes, or by splitting
pipes into pipes of smaller degree.  This suggests that, eventually, we
arrive at an instance without pipes.  However, there are two caveats. 
First, the encapsulation in the first step of \contract creates new
pipes and thus has the potential to undo progress.  Second, the smaller
pipes resulting from splitting a pipe with \propagate might cause
further growth of the instance, potentially causing a super-polynomial
number of steps.

We resolve both issues by using a more fine-grained measure of
progress in the form of a potential function.  To overcome the first
issue, we show that for each application of \contract, the progress
that is undone in the first step is outweighed by the progress made
through the following join in the second step.  Similarly, for
the second issue, we show that the sum of the parts is no bigger than
the whole when splitting pipes.

As P-vertices of degree 3 or less are converted to Q-vertices (see
\Cref{deg-lt-3-to-q}), we use $\vdegs(u)=\vdegs(v)=\vdegs(\rho)=
\max\{\vdeg(x)-3,0\}$ to denote the number of incident edges that keep a
P-vertex $u$ (and also the other endpoint $v$ of its pipe
$\rho=(u,v,\varphi_{uv})$) from becoming converted to a Q-vertex. We
also partition the set of all pipes $\mathcal{P}$ into the two cells
$\mathcal{P}_{CC}$ and 
$\mathcal{P}_B=\mathcal{P}\setminus\mathcal{P}_{CC}$, where
$\mathcal{P}_{CC}$ contains all pipes where both endpoints are
cut-vertices.
We define the \emph{potential} of an instance $I$ as
  $\Phi(I) = \sum_{\rho\in\mathcal{P}_B} \vdegs(\rho) +
  \sum_{\rho\in\mathcal{P}_{CC}} (2\vdegs(\rho)-1)$.
\ifthenelse{\boolean{long}}{
  
  We show that the operations always decrease this potential. To analyze
  the potential change of \propagate and \contract, we need the following
  technical lemma for bounding the sum of the degrees of multiple smaller
  pipes replacing a single bigger pipe.
}{
  We show that the operations always decrease this potential. 
}

\newcommand{\lemSums}{
\begin{lemma}\label{lem:sums}
  Let $k\geq2, \,d_1\geq d_2\geq\ldots\geq d_k\geq1$ and $c\geq0$ be integers.
  Let $j=\max\left\{i\mid d_i\geq3\right\}$ and let $\ell=\left|\left\{i\mid d_i=2\right\}\right|$. If $3 \leq c + \ell + 2j$, then
  $k + \sum_{i=1}^{k} \max\{d_i-3, 0\} \leq c - 3 + \sum_{i=1}^{k} d_i$.
  If $4 \leq \sum_{i=1}^{k} d_i$, then
  $    \sum_{i=1}^{k} \max\{d_i-3, 0\} \leq   - 4 + \sum_{i=1}^{k} d_i$.
\end{lemma}
\begin{proof}
  Observe that the $d_i$ are ordered increasing and thus 
  $d_i \ge 3$ for $i=1,\dots,j$ and $d_i < 3$ for $i=j+1,\dots,k$.
  More specifically, it is $d_{j+1},\ldots,d_{j+\ell}=2$ and 
  $d_{j+\ell+1},\ldots,d_{k}=1$.
  This yields
  $\sum_{i=j+1}^{j+\ell} d_i = 2\cdot\ell$ and
  $\sum_{i=j+\ell+1}^{k} d_i = k-\ell-j$
  and we can also avoid the ``$\max$'' using
  $\sum_{i=1}^{k} \max\{d_i-3, 0\}
  = \sum_{i=1}^{j} (d_i-3) = -3j + \sum_{i=1}^{j} d_i$.
We now start at $3 \leq c + \ell + 2j$, which can be rewritten as
  $k -3j \leq c - 3 + 2\ell + (k-\ell-j)$.
  Adding $\sum_{i=1}^{j} d_i$ on both sides and using the above
  observations yields
  \[k + \sum_{i=1}^{k} \max\{d_i-3, 0\}
  = k -3j + \sum_{i=1}^{j} d_i 
  \leq c - 3 + \sum_{i=j+1}^{j+\ell} d_i
   + \sum_{i=j+\ell+1}^{k} d_i + \sum_{i=1}^{j} d_i
  = c - 3 + \sum_{i=1}^{k} d_i.\]

  Note that inserting $c=k-1$ in the first formula yields the second
  formula. It remains to show that in this case 
  $3 \leq c + \ell + 2j$ or the equivalent $4 \leq k + \ell + 2j$
  follow from $4 \leq \sum_{i=1}^{k} d_i$.
  If~$k\geq4$, the inequality obviously always holds.
  If $k=3$, it must be $j>1$ or $\ell>1$ as the sum is at least 4.
  If $k=2$, it must be $j>1$ or $\ell\geq2$ as the sum is at least 4.
\end{proof}
}
\ifthenelse{\boolean{long}}{\lemSums}{}

\begin{restatable}{lemma}{lemItPotDec}\label{lem:iteration-potential-decrease}
For an instance \pqplanvars of \pqplan and an instance \pqplanvars[']
that results from application of either \contract, \propagate or
\simplify to $I$, the following three properties hold:
\begin{compactenum}[(i)]
\item The potential reduction $\Delta\Phi = \Phi(I)-\Phi(I')$ is at
  least 1.
\item The number of nodes added to the graph satisfies
  $\Delta V = |V(G')|-|V(G)| \leq 2 \cdot \Delta \Phi + 12$.
\item If the operation replaces a connected component $C$ by one or
  multiple connected components, then each such component $C'$
  satisfies $\Delta E(C) = |E(C')|-|E(C)| \leq 2 \cdot \Delta \Phi$.
\end{compactenum}
\end{restatable}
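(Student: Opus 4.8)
The plan is to treat the three operations separately, verifying the three claimed inequalities for each. For every operation, the central task is (i) — bounding the potential change — and items (ii) and (iii) will essentially follow as bookkeeping once we know how the degrees of the new pipes relate to the old one. Throughout I would use the fact that \convert has already been applied, so every P-vertex has degree at least 4, hence $\vdegs(\rho) = \vdeg(\rho)-3 \geq 1$ for every pipe $\rho$. It is also convenient to record that a single pipe contributes either $\vdegs(\rho)$ (if in $\mathcal P_B$) or $2\vdegs(\rho)-1$ (if in $\mathcal P_{CC}$) to $\Phi$, so a $\mathcal P_{CC}$ pipe always contributes \emph{more} than a $\mathcal P_B$ pipe of the same degree; moving a pipe from $\mathcal P_{CC}$ to $\mathcal P_B$ while keeping its degree already decreases $\Phi$ by $\vdegs(\rho)\geq 1$.

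\emph{\simplify.} In \Cref{case:simplify-unmatched} and \Cref{case:simplify-toroidal} a pipe $\rho$ is simply removed (or the instance becomes a trivial no-instance $K_5$, which has potential $0$), so $\Delta\Phi \geq \vdegs(\rho)\geq 1$ and no vertices or edges are added. In \Cref{case:simplify-matched} the pipes $\rho$ and $\rho'$ are replaced by a single pipe $\rho^\star=(u',v',\varphi_{u'v'})$; since $u$ is a block-vertex, $\rho\in\mathcal P_B$, so we lose at least $\vdegs(\rho)+\vdegs(\rho')$ from $\Phi$ and gain at most $2\vdegs(\rho^\star)-1 = 2\vdegs(\rho')-1$ (here $\vdeg(\rho^\star)=\vdeg(v')=\vdeg(\rho')$ since $\varphi_{u'v'}$ is a bijection between the edges at $u'$ and $v'$, and note $\vdeg(u')=\vdeg(u)=\vdeg(\rho)$). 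Hence $\Delta\Phi \geq \vdegs(\rho)+\vdegs(\rho')-(2\vdegs(\rho')-1) = \vdegs(\rho)-\vdegs(\rho')+1$. The subtle point is that this is only obviously $\geq 1$ if $\vdegs(\rho)\geq\vdegs(\rho')$; the remedy is exactly the maximality-of-degree argument used in the proof of \Cref{lem:reduced-instance}: \simplify is applied with $u$ of maximum degree among matched cut-vertices (or there are none), and one shows $\vdeg(\rho)=\vdeg(u)=\vdeg(v)\geq\vdeg(v')=\vdeg(\rho')$ either because $v'$ is not a cut-vertex (so $u$ being max applies) or because $v,v'$ lie in a common block where $\vdeg(v)$ counts all of $v$'s edges while $v'$ may have more outside. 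No vertices are added in any case, so (ii) holds trivially, and (iii) holds since $E(C_{u'})$ changes by at most the number of reconnected edges, which is $O(\vdeg(\rho'))$ and in fact bounded by $2\Delta\Phi$ after the degree comparison.

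\emph{\propagate.} Here the pipe $\rho=(u,v,\varphi_{uv})$ with $u$ a block-vertex is replaced by the pipes $\rho_\alpha$, one per inner P-node $\alpha$ of the embedding tree $\mathcal T_u$, together with new Q-constraints. Since $u$ is a block-vertex, $\rho\in\mathcal P_B$; the new pipes $\rho_\alpha$ all have $\alpha$ an inner node of a PQ-tree, hence a block-vertex in $G'$, so they too lie in $\mathcal P_B$ and each contributes $\vdegs(\rho_\alpha)=\vdeg(\alpha)-3$. So $\Delta\Phi = \vdegs(\rho) - \sum_\alpha \vdegs(\rho_\alpha) = (\vdeg(u)-3) - \sum_\alpha \max\{\vdeg(\alpha)-3,0\}$ where the sum is over inner P-nodes; since inner Q-nodes contribute $0$ to the sum we may as well sum $\max\{d_i-3,0\}$ over \emph{all} inner nodes of $\mathcal T_u$ with degrees $d_1\geq\cdots\geq d_k$. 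This is precisely the setting of \Cref{lem:sums}: the leaves of $\mathcal T_u$ are in bijection with the edges at $u$, so $\sum d_i \geq \vdeg(u)\geq 4$ (with a bit of care about internal tree edges, which only increase $\sum d_i$), and the non-trivial hypothesis gives $k\geq 2$; the second inequality of \Cref{lem:sums} then yields $\sum \max\{d_i-3,0\} \leq \sum d_i - 4$. Relating $\sum d_i$ back to $\vdeg(u)$ — each internal edge of the tree is counted twice, once at each endpoint — gives $\sum d_i = \vdeg(u) + 2(k-1)$, whence $\sum\max\{d_i-3,0\} \leq \vdeg(u) + 2k - 6$, and so $\Delta\Phi = (\vdeg(u)-3) - \sum\max\{d_i-3,0\} \geq 3 - 2k + \ldots$; I need to run this carefully, but the clean statement to aim for is $\Delta\Phi\geq 1$ together with $\Delta V = 2(k-1) = (\text{inner tree edges}) \cdot 2$, which is $O(\Delta\Phi)$ up to the additive constant. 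For (iii), the single connected component containing $u$ (resp.\ $v$) has its edge set grow by exactly the internal edges of $\mathcal T_u$ (resp.\ $\mathcal T_u'$), which is $2(k-1) \leq 2\Delta\Phi$ after the degree accounting.

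\emph{\contract.} This is the main obstacle, because the first (encapsulation) step \emph{creates} pipes and so increases $\Phi$, and the delicate point is that the pipe $\rho$ being contracted lies in $\mathcal P_{CC}$ and thus carries the bonus term $-1$. The plan: let $C_1,\dots,C_k$ be the components of $C_u-u$, giving new pipes $(u_i,u_i',\varphi_{u_iu_i'})$, and similarly $C_1',\dots,C_{k'}'$ for $v$ with pipes $(v_j,v_j',\varphi_{v_jv_j'})$; write $d_i=\vdeg(u_i)$ and $d_j'=\vdeg(v_j)$, so $\vdeg(u)=\sum d_i$, $\vdeg(v)=\sum d_j'$, and $\vdeg(\rho)=\vdeg(u)=\vdeg(v)$. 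After the join, $u$ and $v$ disappear and the new $u_i,u_i',v_j,v_j'$ are ray vertices of a single bipartite multi-star component $C_{uv}$; each $u_i'$ and $v_j'$ is a block-vertex of its side (its former block $C_i$ plus one edge into the star center — still a single block), so all the new pipes are in $\mathcal P_B$. Hence the potential change is
\begin{align*}
  \Delta\Phi &= \bigl(2\vdegs(\rho)-1\bigr) - \sum_{i=1}^{k}\vdegs(u_i) - \sum_{j=1}^{k'}\vdegs(v_j') \\
             &\geq \bigl(2(\vdeg(u)-3) - 1\bigr) - \Bigl(\textstyle\sum_i \max\{d_i-3,0\}\Bigr) - \Bigl(\textstyle\sum_j \max\{d_j'-3,0\}\Bigr),
\end{align*}
and I apply the \emph{first} inequality of \Cref{lem:sums} to each of the two sums, choosing the slack parameter $c$ so that the two ``$c-3$'' terms consume the available budget $2\vdeg(u)-1$ (note $2\vdeg(u) = \vdeg(u)+\vdeg(v) = \sum d_i + \sum d_j'$, which is exactly what the two copies of the lemma produce on their right-hand sides). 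The hypothesis $3\leq c+\ell+2j$ of \Cref{lem:sums} must be checked for each side; since $u$ is a cut-vertex, $k\geq 2$, and as both endpoints had degree $\geq 4$ one can verify the hypothesis holds (possibly after splitting into the same small cases $k=2,3$ as in the proof of \Cref{lem:sums}). Carrying the arithmetic through gives $\Delta\Phi\geq 1$ — the $-1$ in the $\mathcal P_{CC}$ term is precisely the slack that pays for one of the two encapsulations being ``free''. For (ii): the encapsulation of $u$ adds $2k$ new vertices and of $v$ adds $2k'$, and the join removes $2$, so $\Delta V = 2k+2k'-2$; using $k,k'\leq \vdeg(u) = \vdeg(v)$ and the potential bound one gets $\Delta V \leq 2\Delta\Phi + 12$ (the additive constant absorbing the small-degree / small-$k$ cases where the lemma's slack is loose). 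For (iii): the new component $C_{uv}$ has $|E(C_{uv})| = |E(C_u)| + |E(C_v)| - 2\vdeg(\rho) + (2k+2k')$-ish — more simply, each former component $C_u,C_v$ contributes its edges plus a bounded number of new star edges, and one checks $|E(C_{uv})| - \max\{|E(C_u)|,|E(C_v)|\} \leq 2\Delta\Phi$ using $|E(C_u)|,|E(C_v)|\geq \vdeg(\rho)$. The one genuinely fiddly part is pinning down the additive constants $+12$ and making sure the \Cref{lem:sums} hypotheses are met in the degenerate small cases; everything else is linear bookkeeping once the pipe-class memberships ($\rho\in\mathcal P_{CC}$, all new pipes in $\mathcal P_B$) are established.
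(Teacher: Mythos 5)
Your overall skeleton (per-operation case analysis, classifying pipes into $\mathcal P_B$ versus $\mathcal P_{CC}$, and leaning on \Cref{lem:sums}) matches the paper, but the \propagate case has a genuine gap that you yourself flag. Routing that case through \Cref{lem:sums} cannot work: with $\sum_i d_i = \vdeg(u)+2(k-1)$, the second inequality of \Cref{lem:sums} only gives $\sum_i \max\{d_i-3,0\} \le \vdeg(u)+2k-6$, hence $\Delta\Phi \ge 3-2k$, which is useless (negative) for every non-trivial tree. The missing idea is structural: an embedding tree has no inner nodes of degree~$2$, so $d_i\ge 3$ for all $i$, the maxima disappear, and the tree identity $\vdeg(u)=\sum_i(d_i-2)+2$ gives the exact value $\Delta\Phi=k-1$. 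This exact value is also what (ii) and (iii) require: $\Delta V=2(k-1)$ and $\Delta E(C)=k-1$ are bounded by $2\Delta\Phi$ precisely because $\Delta\Phi$ equals $k-1$; your ``clean statement to aim for'' ($\Delta\Phi\ge1$ together with $\Delta V=2(k-1)$) would not establish (ii) or (iii) when $k$ is large.

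In \simplify, \Cref{case:simplify-matched}, the worry that $\vdegs(\rho)$ might be smaller than $\vdegs(\rho')$ is vacuous, and the proposed remedy is the wrong move. A pipe by definition matches vertices of equal degree, so $\vdeg(\rho')=\vdeg(v)=\vdeg(v')$, and since both poles $u,v$ of the bond $\mu$ have trivial embedding trees, each has degree equal to the number of parallel edges of $\mu$; hence $\vdeg(\rho)=\vdeg(u)=\vdeg(v)=\vdeg(\rho')$ and the drop is $\vdegs(\rho)\ge 1$ (exactly $1$ if $\rho^*\in\mathcal P_{CC}$). Appealing to a maximum-degree scheduling rule as in \Cref{lem:reduced-instance} is not available here: the lemma must hold for an arbitrary single application of the operation, and your inequality $\vdeg(v)\ge\vdeg(v')$ conflates the pipe partner $v'$ with the bond partner from that other proof. (Also, \simplify leaves the graph untouched, so $\Delta V=\Delta E(C)=0$; no edges are ``reconnected''.) Finally, in \contract your bookkeeping for (ii) via $k,k'\le\vdeg(u)$ is too weak, since rays of degree at most~$3$ add vertices without being paid for in $\Delta\Phi$; what is needed is the first inequality of \Cref{lem:sums} with $c=3$, i.e.\ $k+\sum_i\max\{d_i-3,0\}\le D$ on each side, which together with $\Delta\Phi=2D-7-\sum_i\max\{d_i-3,0\}-\sum_j\max\{d_j'-3,0\}$ yields $\Delta V=2(k+k'-1)\le 2\Delta\Phi+12$. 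Your potential bound (i) for \contract is fine and is essentially the paper's argument.
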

\newcommand{\oppoteff}[1]{}
\newcommand{\proofItPotDec}{
\begin{proof}
We now analyze the effects of \contract, \propagate, and \simplify on the
measures $\Delta\Phi$, $\Delta V$ and $\Delta E(C)$ and show that the found
changes satisfy the claimed bounds.

\oppoteff{\contractTitle $\Delta\Phi$}
Operation \contractop in the first step encapsulates both cut-vertices
$u,v$ to their own star components. For each block incident to $u$,
this introduces two new vertices that are connected by a new pipe. Let
$d_1\geq d_2\geq\ldots\geq d_{k-1}\geq d_k\geq1$ be the degrees of the
$k\geq2$ ray vertices of $u$ after the encapsulation. As one end of the added
pipes is a block-vertex, the potential is increased by $\sum_{i=1}^k
\max\{d_i-3, 0\}$. The rays around $v$ increase the potential and
number of vertices likewise, where $d_1'\geq d_2'\geq\ldots\geq
d_{k'-1}'\geq d_{k'}'\geq1$ are the degrees of the $k'\geq2$ rays of
$v$ after the encapsulation. Using $\sum_{i=1}^{k}d_i
=\sum_{i=1}^{k'}d_i'=D$ it is $\vdeg(\rho)=\vdeg(v)=\vdeg(u)=D$ and
$\vdegs(\rho)=\vdegs(v)=\vdegs(u)=\max\{D-3,0\}=D-3$ as $u$ and $v$
are P-vertices of the same degree greater then three. In the second
step, removing $\rho$ connecting two cut-vertices together with its
endpoints reduces the potential by $2\vdegs(\rho)-1$ and we thus get
$\Delta\Phi\ =\ 2\cdot\left( D - 3 \right) - 1\ -
\sum_{i=1}^{k} \max\{d_i-3, 0\}\ - \sum_{i=1}^{k'} \max\{d_i'-3, 0\}$.

As $D\geq4$, we know from the second formula of \Cref{lem:sums} that
$\sum_{i=1}^{k} \max\{d_i-3, 0\} \leq (\sum_{i=1}^{k} d_i)-4 = D-4$
and also $\sum_{i=1}^{k'} \max\{d_i'-3, 0\} \leq D-4$. Using this
inequality in the formula above yields $\Delta\Phi = 2D-7 -
\sum_{i=1}^{k}\max\{d_i-3,0\} - \sum_{i=1}^{k'}\max\{d_i'-3,0\} \geq 
2D-7-(D-4)-(D-4)=1$ as claimed by (i).

\oppoteff{\contractTitle $\Delta V$}
As the encapsulation generates two vertices for each ray and the 
join removes two vertices, we have $\Delta V = 2k+2k'-2$.
\Cref{lem:sums} with $c=3$ unconditionally yields $k\leq
D-\sum_{i=1}^{k}\max\{d_i-3,0\}$ and $k'\leq
D-\sum_{i=1}^{k'}\max\{d_i'-3,0\}$. Now claim (ii) holds as
\[\Delta V \leq 2\cdot\left(D-\sum_{i=1}^{k}\max\{d_i-3,0\}
+D-\sum_{i=1}^{k'}\max\{d_i'-3,0\}-1\right)
=2\cdot\left(\Delta\Phi+6\right).\]

\oppoteff{\contractTitle $\Delta E(C)$}
In the first step of \contract, two new components with
$\vdeg(u)=\vdeg(v)$ edges each are added, which are then pairwise
combined in the second step, yielding a new component with
$\sum_{i=1}^{k} d_i = \sum_{i=1}^{k'} d_i'$ edges, which is no bigger
than the components of $u$ or $v$ as required for (iii).

\oppoteff{\propagateTitle $\Delta\Phi$}
Operation \propagateop replaces the pipe $\rho$ having block-vertex
$u$ as one endpoint by one pipe for each inner P-node of the non-trivial
embedding tree $\mathcal{T}_u$ of $u$.  If $\mathcal{T}_u$ only consists
of a single Q-node, we are removing a pipe without adding any new pipes,
nodes, or edges, thus properties (i)$-$(iii) are trivially satisfied.
Otherwise, let $d_1\geq d_2\geq\ldots\geq d_{k-1}\geq d_k$ be the
degrees of the $k\geq2$ inner vertices of $\mathcal{T}_u$.
The tree $\mathcal{T}_u$ has $\vdeg(u)$ leaves and
thus  $(\sum_{i=1}^k d_i) + \vdeg(u) = 2\cdot|E(\mathcal{T}_u)| =
2\cdot(|V(\mathcal{T}_u)|-1) = 2\cdot(k+\vdeg(u)-1)$ or, equivalently,
$\vdeg(u) = \left( \sum_{i=1}^k (d_i-2) \right) + 2$. As $u$ is a
P-vertex with degree at least 4, $\vdegs(u) = \vdeg(u) - 3 = \left(
\sum_{i=1}^k (d_i-2) \right) - 1$. As $\mathcal{T}_u$ contains no
vertices of degree 2, it is $d_i\geq3$ for $i=1,\ldots,k$. As the
added pipes all have one endpoint in the biconnected component of $u$,
the potential is reduced by $\Delta\Phi = \vdegs(u) - \sum_{i=1}^k
\max\{d_i-3,0\} = \left( \sum_{i=1}^k (d_i-2) \right) - 1 - \left(
\sum_{i=1}^k (d_i-3) \right) = k - 1 \geq 1$, which shows (i).

\oppoteff{\propagateTitle $\Delta V$}
Moreover, we replace the two endpoints of $\rho$ each by the inner
nodes of $\mathcal{T}_u$, yielding $\Delta V = 2k-2$ additional nodes.
Note that $\Delta V = 2k-2 = 2\cdot\Delta\Phi < 2\cdot\Delta\Phi+12$
as claimed by (ii).

\oppoteff{\propagateTitle $\Delta E(C)$}
As each inner node except for the root has one edge connecting it to
its parent, we also add $\Delta E(C) = k-1$ additional edges to each
component. Observe that $\Delta E(C) = k-1 < 2k-2 = 2\cdot\Delta\Phi$
as claimed by (iii).

\oppoteff{\simplifyTitle}
Operation \simplify always removes at least one pipe
$\rho\in\mathcal{P}_B$ and thus decreases the potential by at least
$\vdegs(\rho)$. If two pipes $\rho,\rho'$ are replaced by their
transitive shortcut (i.e.\ \Cref{case:simplify-matched} of \simplify
applies), this adds a new pipe $\rho^*$. If at least one endpoint
of $\rho^*$ is a block-vertex, the potential change is
$\Delta\Phi=2\vdegs(\rho)-\vdegs(\rho)$. Otherwise, both endpoints are
cut-vertices and $\rho^*$ belongs to $\mathcal{P}_{CC}$, yielding a
potential change of $\Delta\Phi=2\vdegs(\rho)-(2\vdegs(\rho)-1)=1$.
As no vertices or edges are added or removed $\Delta V=\Delta E(C)=0$,
which is less than $\Delta \Phi$.
\end{proof}
}
\ifthenelse{\boolean{long}}{\proofItPotDec}{
\begin{proof}[Proof Sketch.]
  We now analyze the effects of \contract on these three measures.
  The operations \propagate and \simplify are discussed in the full version.

  Operation \contractop in the first step encapsulates both cut-vertices $u,v$ to their own star components.
  For each block incident to $u$, this introduces two new vertices that are connected by a new pipe.
  Let $d_1,\dots,d_k$ be the degrees of the $k\geq2$ ray vertices of $u$ after the encapsulation.
  As one end of the added pipes is a block-vertex, the potential is increased by $\sum_{i=1}^k \max\{d_i-3, 0\}$.
  Likewise, the pipes of the $k'$ rays with degrees~$d_1',\dots,d_{k'}'$ around $v$ increase the potential by $\sum_{i=1}^{k'} \max\{d_i'-3, 0\}$.
  Using $\vdeg(\rho)=\sum_{i=1}^{k}d_i =\sum_{i=1}^{k'}d_i'=D\geq3$
  it is $\vdegs(\rho)=\max\{D-3,0\}=D-3$.
  In the second step, removing $\rho$ connecting two cut-vertices together with its endpoints reduces the potential by $2\vdegs(\rho)-1$ and we thus get
  $\Delta\Phi\ =\ 2\cdot\left( D - 3 \right) - 1\ -
  \sum_{i=1}^{k} \max\{d_i-3, 0\}\ - \sum_{i=1}^{k'} \max\{d_i'-3, 0\}$.
In the full version, we show that this
  yields $\Delta\Phi \geq 1$ as claimed by (i).
  As the encapsulation generates two vertices for each ray and the 
  join removes two vertices, we have $\Delta V = 2k+2k'-2$.
  We also show that claim (ii) holds as
  $\Delta V \leq 2\cdot\left(\Delta\Phi+6\right)$.
In the first step of \contract, two new components with
  $\vdeg(u)=\vdeg(v)$ edges each are added, which are then
  combined in the second step, yielding a new component with
  $\sum_{i=1}^{k} d_i$ edges.
  This is no larger than the components of $u$ or $v$ as required for (iii).
\end{proof}
}

\newcommand{\lemMaxPot}{
\begin{lemma}\label{lem:max-potential}
Let $I$ be an instance of \pqplan. Then, $\Phi(I) < 2m$.
\end{lemma}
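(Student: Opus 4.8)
The plan is to bound each of the two sums in $\Phi(I)$ separately. For the first sum, note that every pipe $\rho \in \mathcal{P}_B$ contributes $\vdegs(\rho) = \max\{\vdeg(\rho) - 3, 0\}$, and since a pipe matches two distinct P-vertices $u, v$ of equal degree, the two endpoints together account for $2\vdeg(\rho)$ edge-endpoints at vertices of $G$. Thus $\vdegs(\rho) \le \vdeg(\rho) = \frac{1}{2}(\vdeg(u) + \vdeg(v))$, and summing over the matching $\mathcal{P}_B$ — whose endpoints are all distinct vertices — gives $\sum_{\rho \in \mathcal{P}_B} \vdegs(\rho) \le \frac{1}{2}\sum_{w} \vdeg(w)$, where $w$ ranges over the matched P-vertices involved in $\mathcal{P}_B$. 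Since $\sum_{w \in V(G)} \vdeg(w) = 2m$, this first sum is at most $m$, and in fact strictly less since not all vertices can be matched P-vertices (if $G$ has no edges the claim is vacuous; otherwise there is at least one vertex and the bound leaves slack, but even a cleaner argument below handles this).

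For the second sum, observe that each pipe $\rho \in \mathcal{P}_{CC}$ contributes $2\vdegs(\rho) - 1 = 2\max\{\vdeg(\rho)-3,0\} - 1$. Because P-vertices of degree less than $4$ are assumed to have been converted to Q-vertices, every matched P-vertex has degree at least $4$, so $\vdegs(\rho) = \vdeg(\rho) - 3$ and the contribution is $2\vdeg(\rho) - 7 < 2\vdeg(\rho)$. Again each such pipe has two endpoints that are distinct cut-vertices with a combined degree of $2\vdeg(\rho)$, so $\sum_{\rho \in \mathcal{P}_{CC}} (2\vdegs(\rho) - 1) < \sum_{\rho \in \mathcal{P}_{CC}} 2\vdeg(\rho) = \sum_{w} \vdeg(w)$ where $w$ runs over the (distinct) endpoints of $\mathcal{P}_{CC}$-pipes.

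Combining, both sums together range over a set of \emph{distinct} vertices (a pipe endpoint appears in at most one pipe, since $\mathcal{P}$ is a matching), so $\Phi(I) = \sum_{\rho \in \mathcal{P}_B}\vdegs(\rho) + \sum_{\rho \in \mathcal{P}_{CC}}(2\vdegs(\rho)-1) \le \sum_{w \in S}\vdeg(w)$ where $S$ is the set of all vertices matched by some pipe, with the inequality strict whenever $\mathcal{P}_{CC} \neq \emptyset$ (from the $-1$ terms) — and I would instead argue uniformly that for \emph{every} pipe $\rho$, whether in $\mathcal{P}_B$ or $\mathcal{P}_{CC}$, its contribution is at most $2\vdegs(\rho) \le 2(\vdeg(\rho)-3) < \vdeg(u) + \vdeg(v) - 6 < \vdeg(u) + \vdeg(v)$, with strict inequality. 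Hence $\Phi(I) < \sum_{w \in S} \vdeg(w) \le \sum_{w \in V(G)} \vdeg(w) = 2m$.

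The main obstacle — really the only subtlety — is making sure the strict inequality is genuinely justified and that the degree double-counting is applied correctly: I must use that $\mathcal{P}$ is a matching so that no vertex is charged twice across the two sums, and that every matched P-vertex has degree $\geq 4$ (guaranteed by the standing \convert assumption) so that $\vdegs(\rho) = \vdeg(\rho) - 3$ rather than $0$; the $-3$ slack per endpoint, or alternatively the presence of at least one unmatched edge-endpoint, is what converts $\le 2m$ into the desired strict $< 2m$.
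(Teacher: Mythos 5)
Your proof is correct and is essentially the paper's own argument: bound each pipe's contribution by strictly less than $\vdeg(u)+\vdeg(v)$, use that the pipes form a matching so no vertex is charged twice, and conclude via $\sum_{v}\vdeg(v)=2m$. The extra case split between $\mathcal{P}_B$ and $\mathcal{P}_{CC}$ and the appeal to the degree-$\geq 4$ assumption are harmless but not needed (one line has $2(\vdeg(\rho)-3)<\vdeg(u)+\vdeg(v)-6$ where equality holds, but the chain still yields the strict bound).
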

\begin{proof}
  Each pipe $\rho$ matching two vertices $u$ and~$v$ contributes at
  most $2\vdegs(\rho)-1 < \deg(u) + \deg(v)$ to the potential.  Since
  each vertex is part of at most one pipe, the sum of all potentials
  is bounded by~$\sum_{v \in V} \deg(v) = 2m$.
\end{proof}
}

\newcommand{\proofRuntime}{
\begin{proof}
  By \Cref{lem:iteration-potential-decrease} the potential function
  decreases with each applied operation.  Therefore, by
  \Cref{lem:max-potential}, after $k\leq2m$ operations a reduced
  instance $\pqplanvars[']$ is reached.  We claim that the resulting
  graph $G'=(V',E')$ has $|V'| \leq |V| + 4 \cdot |E| + 12 \cdot k$
  vertices and each connected component $C'$ of $G'$ has $|E(C')| \leq 5
  \cdot |E|$ edges.

  Let $\Delta\Phi_i$ for $i\in[1,\ldots,k]$ be the potential reduction
  caused by the $i$th applied operation. According to
  \Cref{lem:iteration-potential-decrease}, this operation also added
  $\Delta V_i \leq 2 \cdot \Delta\Phi_i + 12$ vertices to the graph. By
  \Cref{lem:max-potential} it is $\sum_{i=1}^k \Delta\Phi_i \leq \Phi(I)
  < 2|E|$ and thus $|V'| = |V| + \sum_{i=1}^k \Delta V_i \leq |V| + 4
  \cdot |E| + 12 \cdot k$. Additionally, if the $i$th operation replaces
  a connected component~$C_{i-1}$ by one or multiple connected
  components, then each such component $C_{i}$ satisfies
  $|E(C_{i})|-|E(C_{i-1})| \leq 2 \cdot \Delta \Phi_i$. Each connected
  component $C_1$ of the initial graph $G$ has at most $|E|$ edges.
  Using the same argument as above, we obtain $|E(C')| = |E(C_k)| \leq
  |E| + 4 \cdot |E|$.

  As $m\in \Omega(n)$, this shows that the resulting instance has $O(m)$
  vertices and each connected component has $O(m)$ edges.  Computing the
  embedding trees for a single connected component on demand can thus be
  done in $O(m)$ time.  In addition to this computation, each of the $k$
  operations takes time linear in the degree of the vertex it is applied
  to, which also is in $O(m)$.  Thus, each operation takes $O(m)$ time
  and, in total, it takes $O(m^2)$ time to reach a reduced instance.  As
  the size of this reduced instance is also in $O(m^2)$, using
  \Cref{lem:solve-no-pipes} for finding a solution for the reduced
  instance can be done in $O(m^2)$ time.
\end{proof}
}

\ifthenelse{\boolean{long}}{
  With this lemma, we know that each step decreases the potential by at least 1 without growing the graph too much.
  The following shows an upper bound on the potential.

  \lemMaxPot

  This can be used to bound the size of instances resulting from applying
  multiple operations consecutively and finally to bound the time required
  to find a solution for an instance.
  
  \begin{restatable}{theorem}{thmRuntime}\label{thm:runtime}
    \pqplan can be solved in $O(m^2)$ time.
  \end{restatable}
  \proofRuntime
}{
  With this lemma, we know that each step decreases the potential by at least 1 without growing the graph too much.
  As each vertex contributes at most twice its degree, initially~$\Phi(I) \le 2m$.
This can then be used to bound the size of instances resulting from applying
  multiple operations consecutively and finally to bound the time required
  to find a solution for an instance.
  
  \begin{restatable}{theorem}{thmRuntime}\label{thm:runtime}
    \pqplan can be solved in $O(m^2)$ time.
  \end{restatable}
  \begin{proof}[Proof Sketch.]
    By \Cref{lem:iteration-potential-decrease} the potential function
    decreases with each applied operation.  As initially $\Phi(I) \le 2m$, a reduced
    instance $I'$ is reached after $ k \leq 2m$ operations.
    It can be shown that each connected
    component of $I'$ has $O(m)$ edges, allowing an embedding tree to be
    computed in $O(m)$ time. Each of the $k$ operations runs in linear
    time once the PQ-tree it requires is available. In total, it thus
    takes $O(m^2)$ time to reach a reduced instance. As its size is also
    in $O(m^2)$, \Cref{lem:solve-no-pipes} can be applied to find a
    solution in $O(m^2)$ time.
  \end{proof}
}

\ifthenelse{\boolean{long}}{}{
  \section{Conclusion}

  We have given a quadratic-time algorithm for \pqplan, which improves
  the previous~$O(n^8)$-time algorithm for the linear-time
  equivalent problem \atemb~\cite{Fulek2019}.  Similar to Goldberg and
  Tarjan's push-relabel algorithm, it relies on few and simple
  operations that can be applied pretty much in an arbitrary order.
  This highlights where and how progress is made and thereby clearly
  exposes key ideas that also underlie the algorithm for \atemb.
  For a direct comparison of the approaches, we refer to the full version.

  The applications of \pqplan include solving \cplan, \atemb, \consefe and \ppqconp in quadratic time,
  thanks to linear-time reductions to \pqplan for all of them described in the full version.
  This improves over
  the previously fastest algorithms via the $O(n^8)$-time algorithm
  for \atemb.  In the case of \consefe the reduction used
  in~\cite{Fulek2019} includes a quadratic blowup and therefore yields
  an $O(n^{16})$-algorithm.  Our direct linear-time reduction
  leads to a quadratic algorithm.  

  \bibliography{bibliography}
}

\section{Applications}\label{sec:applications}\label{sec:atemb}
We start our discussion of applications of \pqplan with the \atemb problem.
Recall from the introduction that \atemb has two graphs as input.  One
graph represents a molecule structure with atoms and pipes between them,
the other graph is mapped onto that structure such that edges connect
vertices on a single atom or vertices on neighboring atoms through the
corresponding pipe.  As observed by Fulek and Tóth~\cite[Observation
1]{Fulek2019}, \atemb can be equivalently viewed as follows.  For each
atom consider the graph on that atom together with, for each incident
pipe, one virtual vertex that is incident to all edges that would
normally go through this pipe to a neighboring atom. Note that each pipe
has two virtual vertices corresponding to it, one on each of its
incident atoms.  Then an instance of \atemb is positive if and only
if all these graphs can be embedded such that every pair of virtual
vertices corresponding to the same pipe have opposite rotation.
This directly reduces \atemb to \pqplan.

\begin{theorem}
\atemb can be solved in $O(m^2)$ time.
\end{theorem}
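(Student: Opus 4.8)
The plan is to turn the equivalent formulation of \atemb due to Fulek and Tóth (quoted just before the statement) into an explicit linear-time reduction to \pqplan, and then invoke \Cref{thm:runtime}. First I would fix notation: an \atemb instance consists of the molecule structure graph $H$ (atoms and pipes) and the graph $G$ with a map sending each vertex of $G$ to an atom so that neighboring vertices land on the same atom or on atoms joined by a pipe. Following the quoted observation, for each atom $a$ I build the graph $G_a$ induced by the vertices mapped to $a$, and for every pipe $p$ incident to $a$ I add one new \emph{virtual vertex} $\nu_{a,p}$ made adjacent to exactly those edges of $G$ that would traverse $p$ out of $a$; concretely, an edge $\{x,y\}$ of $G$ with $x$ on atom $a$ and $y$ on atom $b\neq a$ is split so that its $a$-side endpoint becomes $\nu_{a,p}$ and its $b$-side endpoint becomes $\nu_{b,p}$, where $p$ is the pipe joining $a$ and $b$. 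The disjoint union of all the $G_a$ is the underlying (loop-free multi-)graph $G'$ of the \pqplan instance.

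Next I would install the synchronization constraints. Each pipe $p$ of $H$ corresponds to exactly two virtual vertices $\nu_{a,p}$ and $\nu_{b,p}$, which are incident to the same set of (split) edges; this gives a natural bijection $\varphi_p$ between their incident edges, and I add the pipe $(\nu_{a,p},\nu_{b,p},\varphi_p)$ to $\mathcal P$. Since a pipe of $H$ has only its two endpoints, and each virtual vertex belongs to exactly one pipe of $H$, these form a matching on the P-vertices as required. There are no Q-constraints, so $\mathcal Q=\emptyset$ and $\psi$ is empty. This construction is clearly linear in $|V(G)|+|E(G)|+|V(H)|+|E(H)|$, hence in $O(m)$ where $m$ counts edges of the \atemb instance.

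The correctness argument has two directions, both essentially spelled out in the quoted observation. For the forward direction, a valid \atemb embedding of $G$ onto the molecule structure restricts to a planar embedding of each $G_a$ (drawn inside the sphere of atom $a$, with the virtual vertices placed where the pipe mouths attach); two virtual vertices $\nu_{a,p},\nu_{b,p}$ for the same pipe see the bundle of edges through $p$ in opposite cyclic orders because the cylinder of $p$ glues the two mouths with a reversal, so every pipe constraint $\varphi_p(\mathcal E(\nu_{a,p}))=\overline{\mathcal E(\nu_{b,p})}$ is satisfied — the embedding is valid for the \pqplan instance. Conversely, given a valid embedding of $G'$, I embed each $G_a$ on the corresponding sphere, cut out a small disk around each virtual vertex $\nu_{a,p}$ (turning it into a circular boundary through which the $p$-bundle of edges exits in order $\mathcal E(\nu_{a,p})$), and glue the cylinder of pipe $p$ to the two resulting boundary circles; the pipe constraint guarantees the two circular orders are reverses of each other, which is exactly the compatibility condition needed to glue the cylinder without crossings (this is the "compatible with $\varphi_{xy}$" condition from the split/join discussion). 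The result is a crossing-free embedding of $G$ onto the molecule structure, i.e., a positive \atemb instance. Combining the linear-time reduction with \Cref{thm:runtime} yields an $O(m^2)$ algorithm.

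The routine parts are the bookkeeping of the reduction and the degree/size accounting; the only genuinely delicate point is verifying that gluing the pipe cylinders back produces a valid \emph{planar} (genus-zero on each atom, consistent globally) embedding, i.e., that the opposite-rotation condition on the two virtual vertices of a pipe is exactly the right local condition — but this is precisely the content of Fulek and Tóth's Observation~1 together with the split/join compatibility already established in the preliminaries, so I would cite those rather than reprove the topology.
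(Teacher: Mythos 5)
Your proposal is correct and follows essentially the same route as the paper: the paper likewise invokes Fulek and Tóth's Observation~1 to reformulate \atemb as embedding the per-atom graphs with opposite rotations at paired virtual vertices, and then reads this off directly as a linear-size \pqplan instance to which \Cref{thm:runtime} applies. Your write-up merely makes explicit the bijection $\varphi_p$, the matching property, and the two directions of the equivalence, all of which the paper leaves implicit by citing the observation.
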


To reduce \cplan to \pqplan, we use the CD-tree~\cite{Blaesius2015};
see also Figure~\ref{fig:cdtree}.  Each node of the CD-tree
corresponds to a graph, called its \emph{skeleton}.  Some vertices of
a skeleton are \emph{virtual vertices}.  Each virtual vertex
corresponds to exactly one virtual vertex in a different skeleton,
called its \emph{twin}, and there is a bijection between the edges
incident to a virtual vertex and its twin.  The tree structure of the
CD-tree comes from these correspondences between twins, that is, the
CD-tree has an edge between two nodes if and only if their skeletons
have virtual vertices that are twins of each other.  It is known that
a clustered graph is c-planar if and only if the skeletons of all
nodes in its CD-tree can be embedded such that every virtual vertex
and its twin have opposite {rotation~\cite[Theorem~1]{Blaesius2015}\nolinebreak\punctuationfootnote{The theorem
  originally requires ``the same'' instead of ``opposite'' rotations,
  which is equivalent due to the tree structure.}.}  As the CD-tree has
linear size and can be computed in linear time, this yields a linear
reduction from \cplan to \pqplan.

\begin{figure}[t]
  \centering
  \begin{subfigure}[t]{0.4\textwidth}
    \centering
    \includegraphics[scale=1,page=1]{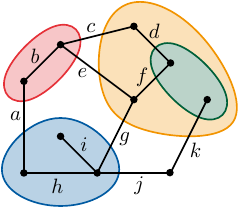}
    \caption{}
  \end{subfigure}\begin{subfigure}[t]{0.6\textwidth}
    \centering
    \includegraphics[scale=1,page=2]{graphics/cdtree.pdf}
    \caption{}
  \end{subfigure}\caption{An instance of \cplan (a) and its CD-tree representation (b),
    where each skeleton is shown with a gray background and the virtual
    vertices are shown as colored disks.}
  \label{fig:cdtree}
\end{figure}

\begin{theorem}
\cplan can be solved in $O((n+d)^2)$ time, where $d$ is the number of cluster boundary crossings.
\end{theorem}
\begin{proof}
We use the disjoint union of all skeletons of the CD-tree and match each
virtual vertex with its twin using a pipe. We can assume that the
underlying graph of a \cplan instance has no multi-edges and it must be
planar to be cluster-planar, thus its number of edges satisfies $m \in
O(n+d)$ and the runtime of our algorithm is $O((n+d)^2)$.
\end{proof}

Another problem that investigates the enforcement of rotation
constraints in planar embeddings is \ppqconp \cite{Blaesius2011}. Here,
each vertex in the graph can be annotated with a PQ-tree that limits the
rotations of (some of) its incident edges.

\begin{figure}[t]
  \centering
  \includegraphics[scale=1,page=3]{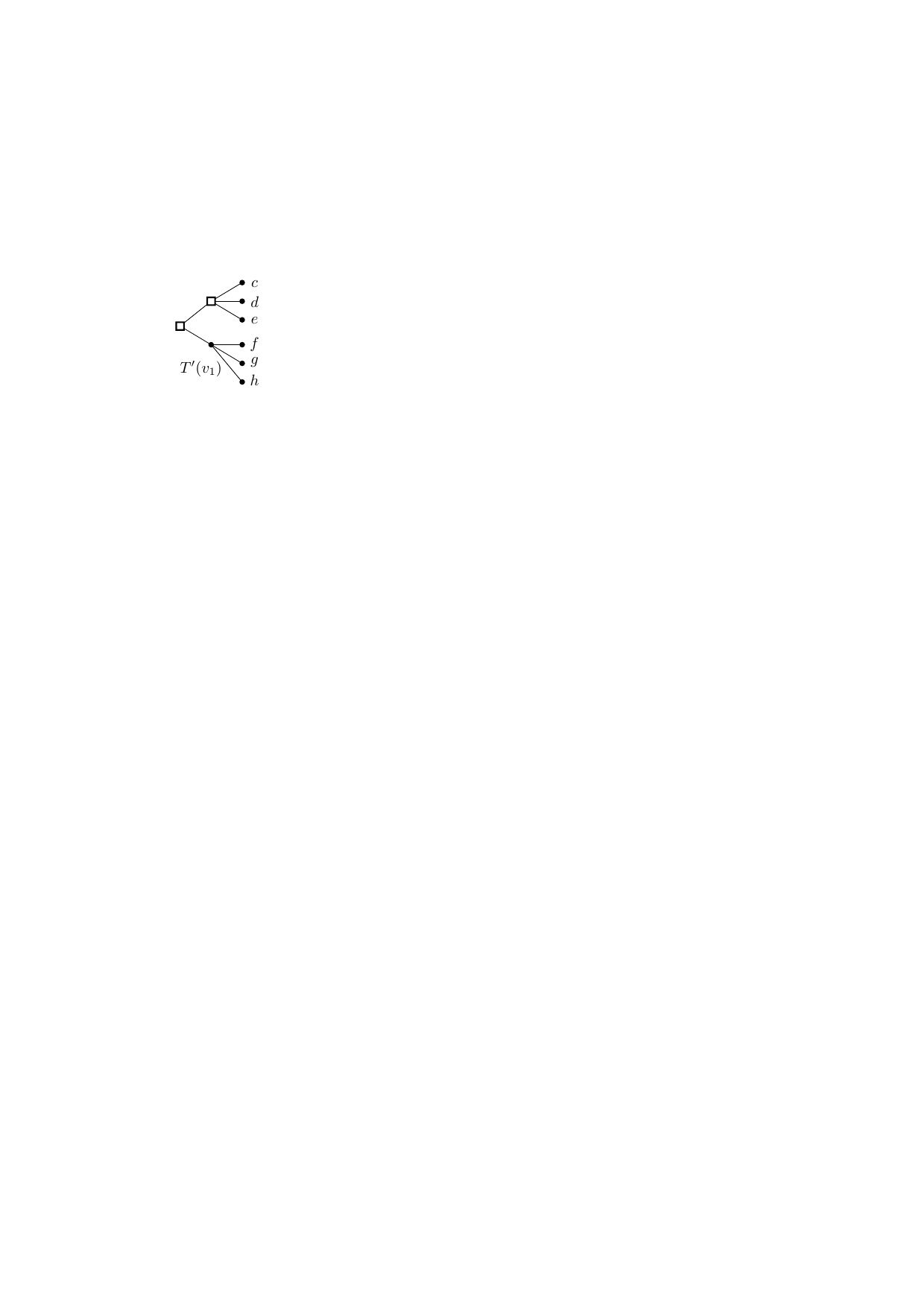}
  \caption{The equivalent \pqplan instance for an instance of
    \ppqconp, where the PQ-tree $T'(v_1)$ (left) restricts the order of
    the edges $\{a,\ldots,h\}$ around the vertex $v_1$ in $G$ (right).
    The cap-vertex $v_1'$ was added together with two degree-1 vertices.}
  \label{fig:partial-pq-const}
\end{figure}

\begin{theorem}
\ppqconp can be solved in $O(m^2)$ time.
\end{theorem}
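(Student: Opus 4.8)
The plan is to reduce \ppqconp to \pqplan in linear time, so that the $O(m^2)$ bound follows directly from \Cref{thm:runtime}. The core idea is already visible in \Cref{fig:partial-pq-const}: given an instance of \ppqconp where a vertex $v$ carries a PQ-tree $T'(v)$ restricting the cyclic order of a \emph{subset} $S(v)$ of its incident edges, I would model the restriction by attaching a gadget to $v$ that forces exactly those orders permitted by $T'(v)$. Concretely, for each such $v$ I introduce a ``cap-vertex'' $v'$ together with the edges needed to realize $T'(v)$: the edges of $S(v)$ get rerouted (or duplicated and linked) so that their cyclic order around $v$ is governed by a copy of the embedding structure of $T'(v)$, while the unconstrained edges at $v$ remain free. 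Since a PQ-tree's allowed orders are precisely the rotations of a planar graph at a distinguished vertex (cf.\ the embedding-tree discussion in the preliminaries), $T'(v)$ can be realized as an actual planar subgraph whose rotation at the attachment point ranges over exactly the orders encoded by $T'(v)$; the two degree-$1$ vertices mentioned in the caption serve to pin down the cyclic frame so that the ``partial'' nature (only $S(v)$ is constrained, and only up to the interleaving with the free edges) is respected.

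The key steps, in order, would be: (1) Recall/state the correspondence between PQ-trees and rotations of planar graphs, so that each $T'(v)$ can be turned into a concrete planar gadget $H_v$ with a boundary matching $S(v)$. (2) Build the \pqplan instance: take $G$, for every constrained vertex $v$ replace the relevant incidences by the gadget $H_v$ attached at a new cap-vertex $v'$, and add any Q-constraints needed to lock down the internal Q-nodes of $H_v$ (P-nodes of $T'(v)$ simply become P-vertices with no pipe, hence unconstrained, which is exactly what we want). Crucially, no pipes are needed here — a single local gadget per vertex suffices because \ppqconp constrains each vertex independently — so $\mathcal P=\emptyset$ and only Q-constraints (from the wheel-encoding of Q-nodes) appear. (3) Argue equivalence: a planar embedding of the \pqplan instance restricts to a planar embedding of $G$, and the rotation of $v$ in it lies in the set permitted by $T'(v)$ iff the gadget $H_v$ is embedded consistently, which the Q-constraints enforce; conversely any valid \ppqconp embedding extends to the gadgets. (4) Observe the reduction is linear: each gadget $H_v$ has size $O(\deg_{}(v))$ (a PQ-tree over $k$ leaves has $O(k)$ nodes and edges), so $\sum_v |H_v| = O(m)$, and the construction is done in one pass; then invoke \Cref{thm:runtime}.

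The main obstacle I anticipate is getting the gadget to model the \emph{partial} constraint faithfully: $T'(v)$ restricts only the cyclic order of $S(v)$, and it does so \emph{relative to} the unconstrained edges, i.e.\ the free edges may be interleaved arbitrarily among the groups dictated by $T'(v)$. Naively routing all of $S(v)$ through a single gadget port would forbid such interleavings. The fix is to insert the free edges at $v$ at the appropriate P-node of the gadget (or, more carefully, to design $H_v$ so that its attachment leaves one ``free slot'' through which all unconstrained edges pass, corresponding to the root P-node behavior); one must check that this does not accidentally allow orders forbidden by $T'(v)$ nor forbid permitted ones. Verifying this correspondence rigorously — in particular handling the degenerate cases where $T'(v)$ is trivial, where $S(v)$ is empty or all of $v$'s edges, and where $v$ has low degree so the distinction between P- and Q-behavior collapses — is the part that needs the most care. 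Everything else (planarity preservation, linear size, plugging into \Cref{thm:runtime}) is routine.
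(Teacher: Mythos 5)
There is a genuine gap, and it sits exactly at the point you flag as the ``main obstacle'': your claim that no pipes are needed and that a purely local gadget with $\mathcal P=\emptyset$ suffices is false, and the fix you sketch does not work. The set of rotations of $v$ permitted by a \emph{partial} constraint --- restriction to $S(v)$ lies in the language of $T'(v)$, free edges interleaved arbitrarily --- is in general not representable by any PQ-tree over all of $E(v)$, and hence not realizable as the rotation behaviour of a planar vertex-replacement gadget. Concretely, take $T'(v)$ to be a single Q-node fixing $\langle a,b,c,d\rangle$ and one free edge $e$: the constraint allows both $\langle a,e,b,c,d\rangle$ and $\langle a,b,e,c,d\rangle$, but attaching $e$ at any fixed node of a tree gadget pins $e$ to the ``slots'' between the child subtrees of that node. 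Your proposed fix (hang the free edges off the root P-node) therefore over-constrains: it forbids a free edge from threading between two constrained edges that lie in the same child subtree of the root, which the original instance permits. This is precisely the partition-constraint difficulty discussed in Section~\ref{sec:techn-contr}, and it is why the non-partial setting is easy while the partial setting is not.

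The paper's construction avoids this by \emph{not} touching $v$ at all. The PQ-tree is realized as a separate cap-vertex gadget in which the constrained edges correspond to leaves of the tree and the free edges correspond to pendant degree-$1$ vertices attached directly to the cap-vertex $v'$; such pendants can appear \emph{anywhere} in the rotation of $v'$, so the gadget realizes exactly ``constrained edges in PQ-order, free edges arbitrary.'' A pipe between $v'$ and $v$ then transfers this rotation set to $v$ via the bijection, without any surgery on $G$. The pipe is the essential ingredient here, not an optional convenience, so the reduction genuinely needs the full \pqplan machinery (and \Cref{thm:runtime}) rather than just Q-constraints. Your size and running-time accounting would be fine once the gadget is replaced by the pipe-based one.
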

\begin{proof}
Instances of \ppqconp can be converted to equivalent instances of
\pqplan by first adding the PQ-trees to the graph, converting Q-nodes to
Q-vertices similarly to \propagate. Afterwards, a cap-vertex is added to
each PQ-tree and all leaves are connected to the respective cap-vertex. To
be able to match the cap-vertices with the vertices in the original graph,
further degree-1 vertices are connected to the cap-vertices until their
degree matches the respective node in the original graph; see
\Cref{fig:partial-pq-const}.  The pipe then ensures that the rotation of
the matched vertex is compatible with the PQ-tree it was annotated with.
Note that previous algorithms are geared towards biconnected instances
and cannot handle cut-vertices of degree more than 5, while this approach
works for general graphs.
\end{proof}

\newcommand*\circled[1]{ \protect\tikz[baseline=(char.base)]{ \protect\node[shape=circle,draw,inner sep=0.2pt,scale=0.7] (char) {#1};}}

For two graphs $G^{\circled{1}}$ and $G^{\circled{2}}$, \sefe is
equivalent to finding a pair of planar embeddings that induce the same
(i.e. consistent) cyclic edge orderings and the same (i.e. consistent)
relative positions on their common graph $G=G^{\circled{1}}\cap
G^{\circled{2}}$~\cite{Juenger2009}. Our algorithm can be used to
provide the synchronization for the first half of this requirement,
which is sufficient for instances with a connected shared graph.

\begin{figure}[t]
  \centering
  \begin{subfigure}[t]{0.45\textwidth}
    \centering
    \includegraphics[scale=1,page=5]{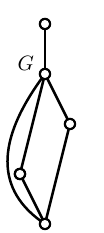}
    \includegraphics[scale=1,page=1]{graphics/sefe.pdf}
    \includegraphics[scale=1,page=3]{graphics/sefe.pdf}
    \caption{}
  \end{subfigure}\begin{subfigure}[t]{0.55\textwidth}
    \centering
    \includegraphics[scale=1,page=8]{graphics/sefe.pdf}
    \caption{}
  \end{subfigure}\caption{
    An instance of \consefe $G=G^{\circled{1}}\cap G^{\circled{2}}$ (a)
    and the equivalent \pqplan instance $I$ (b).}
  \label{fig:sefe}
\end{figure}

\begin{theorem}
\consefe can be solved in $O(n^2)$ time.
\end{theorem}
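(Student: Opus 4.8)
The plan is to reduce \consefe to \pqplan via the characterization of Jünger and Schulz~\cite{Juenger2009} that for a connected shared graph $G = G^{\circled{1}} \cap G^{\circled{2}}$, a \sefe exists if and only if $G^{\circled{1}}$ and $G^{\circled{2}}$ admit planar embeddings that induce consistent cyclic edge orderings on $G$; the second half of the usual \sefe condition (consistent relative positions) is automatically satisfiable once the shared graph is connected, so only the rotation-synchronization part remains, which is exactly what \pqplan captures. First I would take the disjoint union of $G^{\circled{1}}$ and $G^{\circled{2}}$ and, for every vertex $w$ of the shared graph $G$, introduce a pipe between the copy $w^{\circled{1}}$ of $w$ in $G^{\circled{1}}$ and the copy $w^{\circled{2}}$ in $G^{\circled{2}}$ — but the bijection $\varphi$ of the pipe should only synchronize the \emph{shared} edges incident to $w$, so the exclusive edges of each graph at $w$ must first be separated out. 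I would do this exactly as in the \ppqconp reduction: split off the shared edges at $w$ into a new vertex, pad both sides with degree-one vertices so the two new vertices have equal degree, and pipe those two padded vertices together; see~\Cref{fig:sefe}. A planar embedding of this \pqplan instance then corresponds precisely to a choice of embeddings of $G^{\circled{1}}$ and $G^{\circled{2}}$ whose restrictions to $G$ have the same rotation at every shared vertex.

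\textbf{The key steps, in order.} (1) State that, since $G$ is connected, $G^{\circled{1}}$ and $G^{\circled{2}}$ form a valid \sefe iff they admit embeddings consistent on $G$ up to rotation at every shared vertex~\cite{Juenger2009}; this is where connectivity is used and it removes the relative-position condition. (2) Describe the construction: the \pqplan instance $I$ has underlying graph the disjoint union of $G^{\circled{1}}$ and $G^{\circled{2}}$, with each shared vertex $w$ replaced in each graph by $w$ together with an extra ``shared-edge hub'' vertex carrying the shared edges of $w$ plus enough degree-one pendants to match degrees, and a pipe with the natural edge bijection between the two hubs; no Q-constraints are needed. (3) Argue equivalence: a valid embedding of $I$ satisfies every pipe, hence the cyclic orders of shared edges around $w$ in the two graphs are reverses under $\varphi$ — and since a pipe demands \emph{opposite} rotations while consistency of \sefe wants \emph{equal} rotations, I reflect one of the two embeddings (or set up $\varphi$ to already account for the reversal), just as the footnote on the CD-tree handles ``same'' vs.\ ``opposite.'' Conversely any pair of consistent embeddings of $G^{\circled{1}}, G^{\circled{2}}$ yields a valid embedding of $I$ by embedding each hub to mirror the shared rotation. (4) Bound the size: both input graphs are planar (else the instance is trivially negative), so $m \in O(n)$, the construction is linear, and by~\Cref{thm:runtime} the total running time is $O(n^2)$.

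\textbf{The main obstacle} I anticipate is getting the relative-position half of the \sefe characterization to drop out for free. The clean statement ``consistent rotations at shared vertices suffice when $G$ is connected'' is exactly the content of the reduction from \consefe to \cplan by Angelini and Da Lozzo~\cite{ad-scp-16} (and the characterization of Jünger and Schulz), so rather than re-proving it I would cite it directly and only need to check that my \pqplan instance faithfully encodes the rotation-synchronization part. A secondary, purely bookkeeping point is handling shared vertices that are exclusive to neither graph but have exclusive edges on \emph{both} sides, and shared vertices of degree less than the padding target — the degree-one-pendant trick resolves both uniformly, and no pendant is ever needed at a vertex with no exclusive edges on one side. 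Finally I would remark, as the \ppqconp proof does, that this handles arbitrary connected shared graphs including cut-vertices, which earlier \sefe approaches struggled with, and that the quadratic bound improves on the chain of reductions through \cplan and \thicken.

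\begin{proof}
Since the shared graph $G = G^{\circled{1}} \cap G^{\circled{2}}$ is connected, $G^{\circled{1}}$ and $G^{\circled{2}}$ admit a \sefe if and only if they admit planar embeddings whose restrictions to $G$ induce the same cyclic order of shared edges around every vertex of $G$; the consistency of relative positions required in general is automatic in this case~\cite{Juenger2009,ad-scp-16}. We construct an equivalent \pqplan instance $I$ as follows. Its underlying graph is the disjoint union of $G^{\circled{1}}$ and $G^{\circled{2}}$, except that each vertex $w$ of $G$ is modified in both graphs: in $G^{\circled{i}}$ we detach the edges of $G$ incident to $w$ and reattach them to a new \emph{hub} vertex $w^{\circled{i}}$, keeping the exclusive edges of $G^{\circled{i}}$ on $w$ itself, and we add degree-one pendant vertices to the lower-degree of $w^{\circled{1}}, w^{\circled{2}}$ until both have the same degree; see~\Cref{fig:sefe}. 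We add a pipe between $w^{\circled{1}}$ and $w^{\circled{2}}$ whose bijection matches the two copies of each shared edge to each other and matches the pendants arbitrarily, and we set it up so that satisfying the pipe corresponds to the two graphs having the same (rather than opposite) shared rotation at $w$, as in the CD-tree reduction. There are no Q-constraints.

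Given embeddings of $G^{\circled{1}}, G^{\circled{2}}$ consistent on $G$, embedding each hub $w^{\circled{i}}$ to realize the common shared rotation at $w$ and placing the pendants suitably yields a valid embedding of $I$. Conversely, a valid embedding of $I$ satisfies every hub pipe, which forces the shared rotations at $w$ in the two graphs to agree; contracting the hubs back into the vertices of $G$ recovers embeddings of $G^{\circled{1}}, G^{\circled{2}}$ that are consistent on $G$, hence a \sefe. If either input graph is non-planar the instance is trivially negative, so we may assume both are planar and simple, giving $m \in O(n)$; the construction is linear, and by~\Cref{thm:runtime} the instance is solved in $O(n^2)$ time. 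Unlike earlier approaches, this works for every connected shared graph, including those with cut-vertices.
\end{proof}
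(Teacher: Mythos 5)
There is a genuine gap, and it lies in the construction itself. You modify $G^{\circled{1}}$ and $G^{\circled{2}}$: at each shared vertex $w$ you detach the shared edges onto a new hub $w^{\circled{i}}$ (with no edge between $w$ and its hub) and pipe the two hubs. The paper's reduction does something different: it leaves both graphs completely untouched and adds, for each shared vertex $x$, a \emph{separate} bond component $b^{\circled{1}}b^{\circled{2}}$ whose parallel edges correspond to the shared edges at $x$; degree-1 pendants are attached to the bond poles so that $b^{\circled{i}}$ can be piped to the intact vertex $x^{\circled{i}}$ (shared plus exclusive edges). Since pipes are constraints rather than edges, the planar embeddings of the components of the \pqplan instance are exactly planar embeddings of $G^{\circled{1}}$ and $G^{\circled{2}}$, and the bond transmits only the cyclic order of the shared edges while the exclusive edges land on the pendants and remain unconstrained. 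In your construction the padding is in fact vacuous (both hubs carry exactly the $\deg_G(w)$ shared edges), which is a symptom of having moved the pipe to the wrong place; the \ppqconp reduction you invoke likewise keeps the original vertex intact and pipes it to an auxiliary cap gadget, it does not split edges off the vertex.

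The consequence is that the backward direction of your equivalence fails. From a valid embedding of your instance you only get planar embeddings of the \emph{split} graphs; ``contracting the hubs back'' is a vertex identification (there is no edge to contract), and identifying two vertices of a planar graph need not preserve planarity, nor need the hub and $w$ share a face in the embedding you found. Concretely, your reduction maps no-instances to yes-instances: let $G^{\circled{1}}=K_5$ with vertices $v_1,\dots,v_5$, let the shared graph $G$ be the spanning star with center $v_1$, and let $G^{\circled{2}}=G$. No \sefe exists because $G^{\circled{1}}$ is non-planar, yet after your split the modified $G^{\circled{1}}$ is the disjoint union of a $K_{1,4}$ on the hubs and a $K_4$ on the exclusive parts, which is planar, and the hub pipes are easily satisfied; so your \pqplan instance is a yes-instance. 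The constraint you lose is precisely that shared and exclusive edges must be planarly arranged around one and the same vertex. The remaining ingredients of your write-up (the Jünger--Schulz characterization plus connectivity of $G$ to drop relative positions, mirroring one embedding to reconcile ``opposite'' pipe rotations with ``same'' \sefe rotations, and the bound $m\in O(n)$ with \Cref{thm:runtime}) agree with the paper, but the gadget must be repaired as above for the theorem to follow.
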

\begin{proof}
We add both $G^{\circled{1}}$ and $G^{\circled{2}}$ to the \pqplan
instance and also add a bond $b^{\circled{1}}b^{\circled{2}}$ for each
vertex $x\in G$.  The parallel edges of the bond correspond to the edges
incident to $x$ in $G$.  Again, we add further degree-1 vertices so that
we can match $x^{\circled{1}}$ with $b^{\circled{1}}$ and
$x^{\circled{2}}$ with $b^{\circled{2}}$; see \Cref{fig:sefe}. The
\pqplan algorithm can then be used to obtain embeddings $\mathcal E_1$
and $\mathcal E_2$ for $G^{\circled{1}}$ and $G^{\circled{2}}$, respectively.
As pipes reverse the order of incident edges, a solution for the \pqplan
instance will have one of the two graphs mirrored with respect to the
graph shared with the other, that is, the solution for the SEFE instance is
$\mathcal E_1, \overline{\mathcal E_2}$.
\end{proof}

\let\ftop\textsf
\section{Comparison with the Fulek-Tóth Algorithm}\label{sec:comparison}

We want to point out that there are many parallels between our algorithm for solving \pqplan and Fulek and Tóth's solution to \atemb.
First note that both problems are linear-time equivalent, a reduction in the one direction has been given in the previous section.
To reduce from \pqplan to \atemb, all connected components can be turned into atoms, subdividing pipes that loop back to the same component with a trivial atom.
It remains to encode the Q-constraints, which can be done by converting all Q-vertices to wheels and synchronizing the centers for all Q-vertices in the same partition with pipes to one additional atom per partition.
Using a triconnected graph for this atom ensures that exactly two flips are possible and all Q-vertices in the partition are flipped the same way.
\smallskip

\noindent
The algorithm of Fulek and Tóth relies on seven basic operations:
\begin{compactitem}
\item \ftop{Suppress} eliminates pipes with degree two or less,
\item \ftop{Delete} removes certain edges when all affected components are subcubic,
\item \ftop{Split} splits an atom into its connected components, and
\item \ftop{Detach} splits unmatched cut-vertices.
\item \ftop{Enclose} encloses a cut-vertex inside its own atom,
\item \ftop{Contract} joins two neighboring atoms (under certain conditions), and
\item \ftop{Stretch} encodes the fact that a subset of edges incident
  to a vertex must be consecutive by splitting the vertex into two.
\end{compactitem}
Fulek and Tóth carefully orchestrate these operations into two large subroutines,
which they then iteratively use to globally reduce the maximum degree of cutvertices that correspond to pipes.
Eventually all such cutvertices have small degree and the problem can be solved directly
as each part of the instance is either subcubic or what they call toroidal.

Our algorithm only uses four basic operations: \convert, \contract, \simplify, and \propagate.
These operations can be applied in an arbitrary order to reach a reduced instance, which can then be solved directly.

In direct comparison, our operation \convert handles all the cleanup (and more) that Fulek and Tóth achieve with \ftop{Suppress} and \ftop{Delete}.
Further, in our context, the operation \ftop{Split} is not needed, since we do not differentiate between connected components and atoms,
and \ftop{Detach} is not needed since we handle unmatched cutvertices in our base case algorithm rather than during the reduction phase.

Our operation \contract replaces multiple calls to \ftop{Enclose} and \ftop{Contract} as a much more targeted solution.
The former also encompasses the stretching of local branches in Carmesin's work~\cite{Carmesin2017}.
Our operation and the conditions under which it can (and must be) applied clearly expose the key insight that also stands behind the algorithm of Fulek and Tóth,
and which we have formalized in Lemma~\ref{lem:bipartite-all-embeddings-respect-cut}:
any planar embedding of the bipartite graph resulting from contracting two stars at their centers respects the cut made for splitting both stars.
Furthermore, there are parallels between \ftop{Contract} and case \ref{case:simplify-matched} of \simplify (i.e. when a bond links two distinct pipes)\punctuationfootnote{
\ftop{Contract} could also be used instead of \propagate between two block-vertices of distinct connected components, although this would break our runtime analysis.}.
Finally, thanks to the use of PQ- and SPQR-trees, \propagate can be seen as a much more focused replacement of multiple iterations of \ftop{Stretch}.
An example for how the operations relate can be seen in Figure 11 of the paper by Fulek and Tóth~\cite{Fulek2019},
where an instance on which \simplify and then \propagate can (and clearly should) be applied,
is first \ftop{Contract}ed twice in Step (iv.a) and then \ftop{Stretch}ed in (a possibly later iteration of) Step (v.c) of their Subroutine 2.

To conclude, our approach makes it clear to see where progress is made, uncovering important ideas that are not obvious in the global degree-reduction approach employed by Fulek and Tóth.

\ifthenelse{\boolean{long}}{
  \section{Conclusion}
  We have given a quadratic-time algorithm for \pqplan, which improves the previous~$O(n^8)$-time algorithm for the linear-time equivalent problem \atemb~\cite{Fulek2019}.
  Similar to Goldberg and Tarjan's push-relabel algorithm, it relies on few and simple operations that can be applied in an arbitrary order.
  They also highlight where and how progress is made and thereby clearly expose key ideas that also underlie the algorithm for \atemb.

  The applications of \pqplan include solving \cplan, \atemb, \consefe and \ppqconp in quadratic time, thanks to linear-time reductions to \pqplan for all of them.
  This improves over the previously fastest algorithms via the $O(n^8)$-time algorithm for \atemb.
  In the case of \consefe the reduction used in~\cite{Fulek2019} includes a quadratic blowup and therefore yields an $O(n^{16})$-algorithm.
  Our direct linear-time reduction leads to a quadratic algorithm.

  \bibliography{bibliography}
}{}

\end{document}